\newtheorem{thm}{Theorem}[section]
\newtheorem{lem}[thm]{Lemma}
\newtheorem{prop}[thm]{Proposition}
\theoremstyle{definition}
\newtheorem{Def}[thm]{Definition}
\newtheorem{rem}[thm]{Remark}
\newtheorem{ass}[thm]{Assumption}
\numberwithin{equation}{section}
\def\PI{P_{\rm I}}
\def\p{\partial}
\def\bbC{{\mathbb C}}
\newcommand{\Res}{\mathop{\rm Res}}
\begin{document}
\allowdisplaybreaks

\newcommand{\arXivNumber}{1507.06557}

\renewcommand{\PaperNumber}{011}

\FirstPageHeading

\ShortArticleName{Quantum Curve and the First Painlev\'e Equation}

\ArticleName{Quantum Curve and the First Painlev\'e Equation}

\Author{Kohei {IWAKI}~$^\dag$ and Axel {SAENZ}~$^\ddag$}

\AuthorNameForHeading{K.~Iwaki and A.~Saenz}

\Address{$^\dag$~Graduate School of Mathematics, Nagoya University, Nagoya, 464-8602, Japan}
\EmailD{\href{mailto:iwaki@math.nagoya-u.ac.jp}{iwaki@math.nagoya-u.ac.jp}}

\Address{$^\ddag$~Department of Mathematics, University of California, Davis, CA 95616-8633, USA}
\EmailD{\href{mailto:asaenz@math.ucdavis.edu}{asaenz@math.ucdavis.edu}}

\ArticleDates{Received August 04, 2015, in f\/inal form January 22, 2016; Published online January 29, 2016}

\Abstract{We show that the topological recursion for
the (semi-classical) spectral curve of the f\/irst Painlev\'e
equation $\PI$ gives a WKB solution for the isomonodromy problem
for~$\PI$. In other words, the isomonodromy system is a~quantum curve in the sense of
[Dumitrescu~O., Mulase~M., \textit{Lett. Math. Phys.} \textbf{104} (2014),
  635--671, arXiv:1310.6022] and [Dumitrescu~O., Mulase~M., arXiv:1411.1023].}

\Keywords{quantum curve; f\/irst Painlev\'e equation;
topological recursion; isomonodoromic deformation; WKB analysis}

\Classification{34M55; 81T45; 34M60; 34M56}

\medskip

\rightline{\it Dedicated to Professor Takahiro Kawai on his seventieth birthday}

\section{Introduction}

{\em Painlev\'e transcendents} are remarkable
special functions which appear in many areas of
ma\-the\-matics and physics (e.g., \cite{FIKN}). These are solutions
of certain nonlinear ordinary dif\/ferential equations
known as {\em Painlev\'e equations}. These equations were
discovered by Painlev\'e and Gambier more than 100 years ago~\cite{Painleve}, and solutions have the so-called
{\em Painlev\'e property}; i.e., any movable singularity must be a pole.
One particular property of the Painlev\'e equations is
existence of the {\em Lax pair}; that is,
each Painlev\'e equation describes an {\em isomonodromic deformation}
of a certain meromorphic linear ordinary dif\/ferential equation~\cite{JMU, JM2}. The monodromy data of the linear ODEs
gives a conserved quantity of the Painlev\'e transcendents.
The {\it Riemann--Hilbert method}, as well as {\it exact WKB analysis}
are applied to analyze the properties of Painlev\'e transcendents~\cite{AKT-P, FIKN, Kap, KT-PI, Takei}.

On the other hand, {\em quantum curves} attract both mathematicians
and physicists since they are expected to encode
the information of many quantum topological invariants, such as
Gromov--Witten invariants, quantum knot invariants etc.
These are concieved in physics literature inclu\-ding~\cite{Agana, Agana2, DFM, GS}.
A quantum curve is an ordinary dif\/ferential (or dif\/ference)
equation containing a formal parameter $\hbar$
(which plays the role of the Planck constant),
like a Schr{\"o}dinger equation. The quantum invariants appear
in the coef\/f\/icients of the
{\em WKB $($Wentzel--Kramers--Brillouin$)$ solution}
of the quantum curve.

The Eynard--Orantin's {\em topological recursion}
introduced in \cite{EO07} is closely related to both of
the quantum curves and Painlev\'e equations (and many other topics).
Topological recursion is a~recursive algorithm to
compute the $1/N$-expansion of the correlation functions
and the partition function of matrix models from its
{\em spectral curve}, and it is generalized to any algebraic curve
which may not come from a matrix model.
In this context, quantum curves were f\/irst discussed in~\cite{BE}
for the Airy spectral curve, and generalized to spectral curves
with various backgrounds (see \cite{DM14, DM14-2, DBMNPS, GS, MS12}
and the survey article \cite{Nor}).
The spectral curves are recovered as the {\em semi-classical limit}
$\hbar \rightarrow 0$ of the quantum curves.
Moreover, the topological recursion is also closely related to
integrability~\cite{BBE12, Borot-Eynard, IM}
as is the relationship between matrix models
and integrable systems~\cite{DGZJ, Kont}.

The aim of this paper is to relate quantum curves and
the {\em first Painlev\'e equation} with a~formal parameter~$\hbar$
\begin{gather*}
\PI\colon \  \hbar^{2} \frac{d^{2}q}{dt^{2}} =
6q^{2} + t.
\end{gather*}
The (semi-classical) spectral curve for the isomonodormy
system associated with $\PI$ is given by
\begin{gather} \label{eq:sp-intro}
y^{2} = 4(x-q_{0})^{2}(x+2q_{0}),
\end{gather}
where $q_{0} = q_{0}(t)$ is an explicit function of~$t$.
This is a family of algebraic curves in $(x,y)$-space
parametrized by $t$. (The curve~\eqref{eq:sp-intro}
appeared in \cite[Section~10.6]{EO07} as the spectral curve of
(3,2)-minimal model.)
Our main result claims that,
starting from the spectral curve~\eqref{eq:sp-intro},
its {\em quantization} through the Eynard--Orantin's topological
recursion (in the sense of~\cite{DM14, DM14-2})
recovers the {\em whole} isomonodoromy system for $\PI$.

The precise statement of our main theorem is as follows.
Let $W_{g,n}(z_{1},\dots,z_{n})$ be the
{\em Eynard--Orantin differential} of type $(g,n)$
def\/ined from the spectral curve~\eqref{eq:sp-intro}
(see Section~\ref{section:top-recursion}).
These are meromorphic multi-dif\/ferential forms,
and $z_{i}$'s are copies of a coordinate on the
spectral curve~\eqref{eq:sp-intro}.
$W_{g,n}$'s also depend on $t$ since the
spectral curve depends on $t$.
Then, our main result states the following.

\begin{thm}[Theorem~\ref{conj:2}]
The following
WKB-type formal series $\psi(x,t, \hbar)$ defined by
\begin{gather} \label{eq:main-intro}
\psi(x(z),t,\hbar) := \exp\left( \sum_{g \ge 0,  n \ge 1}
\hbar^{2g-2+n}
\frac{1}{n!} \frac{1}{2^{n}} \int^{z}_{\bar{z}}\cdots\int^{z}_{\bar{z}}
W_{g,n}(z_{1},\dots,z_{n})
\right)
\end{gather}
satisfies the isomonodromy system associated with~$\PI$.
Here $x(z)$ is an explicit rational function of $z$ which appears
in the parametrization of the spectral curve~\eqref{eq:sp-intro},
and $\bar{z} = -z$.
\end{thm}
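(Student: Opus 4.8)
The plan is to reduce the $\PI$ isomonodromy system to a pair of scalar linear equations and to verify, order by order in $\hbar$, that the series $\psi$ of \eqref{eq:main-intro} satisfies both. Following the Lax pair of \cite{JMU, JM2}, the system reduces to a Schr\"odinger-type \emph{quantum curve}
\[
\bigl[\hbar^{2}\p_{x}^{2} - Q(x,t,\hbar)\bigr]\psi = 0
\]
together with a deformation equation $\hbar\,\p_{t}\psi = \mathcal{B}\psi$, where $\mathcal{B}$ is first order in $\p_{x}$ and the compatibility of the two is exactly $\PI$. The potential satisfies $Q|_{\hbar = 0} = 4(x-q_{0})^{2}(x+2q_{0})$, matching the right-hand side of \eqref{eq:sp-intro}, while its $\hbar$-corrections are governed by an apparent singularity at the moving point $x = q$. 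First I would record this explicit form, together with the identifications $t = -6q_{0}^{2}$ and $2H_{0} = 8q_{0}^{3}$ that tie the deformation parameter of the curve to the $\PI$ time, so that the task reduces to checking each scalar equation on $\psi$ separately.

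For the $x$-equation I would set $S(x,t,\hbar) := \hbar\,\p_{x}\log\psi$ and substitute into the quantum curve to obtain the Riccati equation $S^{2} + \hbar\,\p_{x}S = Q$; expanding $S = \sum_{m \ge 0}\hbar^{m}S_{m}$ turns this into the usual WKB recursion for the $S_{m}$. On the topological recursion side, $\p_{x}$ applied to \eqref{eq:main-intro} reopens one of the symmetric integrations, so that $S$ is reconstructed from the $W_{g,n}$ with one point left free. The leading contribution is $W_{0,1} = y\,dx$, which gives $S_{0} = y$ and hence reproduces $S_{0}^{2} = Q|_{\hbar=0}$, while the $W_{0,2}$ term supplies the subleading WKB prefactor $S_{1}$. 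The heart of this step is to show that the topological recursion relation itself, repackaged through the linear and quadratic loop equations satisfied by the $W_{g,n}$, is equivalent to the Riccati recursion for the $S_{m}$. This is precisely the mechanism by which the principal specialization of a topological-recursion wave function solves a quantum curve \cite{BE, DM14, DM14-2}; here it must be executed for the specific genus-zero curve \eqref{eq:sp-intro}, using the rational parametrization $x = x(z)$, for which the single finite ramification point is $z = 0$ and $\bar z = -z$ is the deck involution.

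For the $t$-equation, the curve \eqref{eq:sp-intro} depends on $t$ only through $q_{0}(t)$, so $\p_{t}(y\,dx)$ is an explicit meromorphic one-form, and I would invoke the Eynard--Orantin variational formula \cite{EO07} to express $\p_{t}W_{g,n}$ as an integral of $W_{g,n+1}$ against the form representing this infinitesimal deformation of the curve. Inserting this into the logarithmic $t$-derivative of \eqref{eq:main-intro} should reproduce the deformation operator $\mathcal{B}$. Once both scalar equations hold on $\psi$, their compatibility is automatic and coincides with $\PI$, so no separate compatibility computation is needed beyond the identifications already recorded.

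I expect the main obstacle to be twofold. First, the symmetric integrals $\int_{\bar z}^{z}$ in \eqref{eq:main-intro} run through the ramification point $z = 0$, where the $W_{g,n}$ have poles along the diagonal, so the divergent parts of these integrals must be regularized consistently and shown to affect only an overall normalization. Second, and more delicate, is matching the \emph{quantum corrections}: the $W_{0,2}$ prefactor computed from the parametrization differs from the naive normalization $Q|_{\hbar=0}^{-1/4}$ by a term localized at the node $x = q_{0}$, and it is exactly this discrepancy---together with the first genuine correction $W_{1,1}$---that must reproduce the apparent-singularity structure of $Q$ and $\mathcal{B}$ at $x = q$ dictated by the $\PI$ Lax pair. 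Verifying that these corrections land on precisely the coefficients prescribed by \cite{JMU, JM2}, rather than on some nearby expression, is where the real work lies.
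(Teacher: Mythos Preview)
Your outline is right in its broad strokes---scalar reduction to a Riccati equation plus a deformation equation, and the use of the variational formula for~$\p_t$---but it misses the specific mechanism that makes this curve different from the Hitchin/Dumitrescu--Mulase examples you cite, and that is where the real work sits.

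The spectral curve \eqref{eq:sp-intro} is \emph{singular}: it has a node at $(x,y)=(q_0,0)$, which in the $z$-coordinate corresponds to the smooth points $z=\pm s$ with $s=\sqrt{3q_0}$. These are zeros of $y(z)$ that are \emph{not} ramification points, so the recursion kernel $K(z,z_1)$ has simple poles there. Consequently the differential recursion for the open free energies $F_{g,n}$ is \emph{not} the one in \cite{DM14,DM14-2}: when you push the contour in the topological recursion you pick up, in addition to the usual residues at $z_1,\dots,z_n$, an extra block of residues at $z=s$ (see Theorem~\ref{thm:diff-rec} and Remark~\ref{rem:difference-from-DM}). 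So the sentence ``this is precisely the mechanism\dots'' is not accurate here; the principal specialization of the standard DM recursion does \emph{not} reproduce the Riccati recursion for this curve.

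What the extra $z=s$ terms do, after principal specialization, is twofold: for $n\ge 2$ they recombine exactly into $2\,\p_t S_m$ via the variational formula (this is Lemma~\ref{lemma:sum-prin-G-and-E} together with Theorem~\ref{thm:variation-of-open-free-energy}), while for $n=1$ the leftover piece equals $2\,dF_g/dt$ (Lemma~\ref{lem:Gg1}). On the other side, the $\hbar$-corrections in the potential~$Q$ coming from the $\PI$ Lax pair contribute precisely $2\sigma_{2g}$ at the same order. Thus the $x$-equation at order $\hbar^{m+1}$ holds \emph{if and only if} $dF_g/dt=\sigma_{2g}$ for $g=(m+1)/2$. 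This is why the paper's proof is a \emph{coupled induction} (Theorem~\ref{thm:induction-scheme}): one assumes the Riccati and deformation identities up to order $k-1$ together with the $\tau$-identity at $g=k/2$, and then proves all three at the next level; the step $dF_{(k+2)/2}/dt=\sigma_{k+2}$ is closed by evaluating at the double turning point $x=q_0$, where $\p_x S_0$ vanishes (Lemma~\ref{prop:partial-taufunction-conjecture}). Your plan, which treats the $x$- and $t$-equations separately and does not bring in the closed free energies at all, cannot close: at each odd order you will be left with an unidentified constant (in~$x$) that is exactly $dF_g/dt-\sigma_{2g}$, and nothing in your scheme shows it is zero.

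Two smaller points. First, your anticipated obstacle about ``regularizing integrals through $z=0$'' is not an issue: for $2g-2+n\ge1$ the $W_{g,n}$ have poles only at the ramification point (not along the diagonal), and the anti-invariance under $z\mapsto -z$ makes $\tfrac12\int_{-z}^{z}$ a well-defined odd primitive; the unstable cases $(0,1)$ and $(0,2)$ are handled by hand. Second, your framing of the ``quantum corrections'' as a finite check involving $W_{0,2}$ and $W_{1,1}$ understates the problem: the potential $Q$ has infinitely many $\hbar$-corrections (see \eqref{eq:hbar-correction-in-quantization}--\eqref{eq:hbar-correction-in-quantization2}), governed by the full formal solution $q(t,\hbar)$ of $\PI$, and matching them is not a local check near $x=q_0$ but the inductive statement above at every order.
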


The above theorem tells us that the isomonodoromy system
associated with~$\PI$ is a quantum curve,
and its particular WKB solution is constructed
by the topological recursion as~\eqref{eq:main-intro}.
The main dif\/ferences between our theorem
and previous results on quantum curves are the following:
\begin{itemize}\itemsep=0pt
\item
Our quantum curve is a restriction of a certain
{\em partial differential equation} (a~holonomic system).
\item %
There are {\em infinitely many $\hbar$-correction terms}
in the quantum curve, and these correction terms
are essentially given by the {\em asymptotic expansion}
of the solution of $\PI$ for $\hbar \rightarrow 0$.
\end{itemize}

This paper is organized as follows.
In Section~\ref{section2}, we brief\/ly review some known facts about~$\PI$ together
with an important result on the WKB analysis of
isomonodromic systems developed by Kawai--Takei \cite{KT-PI, KT05}.
Our main theorem will be formulated in Section~\ref{section3}
after recalling the notion of topological recursion.
We will give a proof of the main results in Section~\ref{section:proof}.

\begin{rem}
After writing the draft version of this paper,
the authors were informed that B.~Eynard also
has the same result which has not been published yet,
but presented in~\cite{Eynard14}.
See also \cite[Chapter~5]{Eynard-book}.
\end{rem}

\section{The f\/irst Painlev\'e equation and isomonodromy system}\label{section2}

Let us consider the {\em first Painlev\'e equation}
with a formal parameter $\hbar$:
\begin{gather*}
\PI\colon \
\hbar^{2} \frac{d^{2}q}{dt^{2}} = 6q^{2} + t.
\end{gather*}
The equation $\PI$ is obtained from
\begin{gather*}
\frac{d^{2} \tilde{q}}{d\tilde{t}^{2}} =
6 \tilde{q}^{2} + \tilde{t}
\end{gather*}
via the rescaling
$\tilde{t} = \hbar^{-4/5} t$,
$\tilde{q} = \hbar^{-2/5} q$.
We will regard $\hbar$ as a small parameter (i.e., Planck's constant),
and investigate a particular formal solution
of $\PI$ which has an $\hbar$-expansion.

\subsection[Formal solution of $\PI$]{Formal solution of $\boldsymbol{\PI}$}

$\PI$ has the following formal power series solution:
\begin{gather} \label{eq:formal-q}
q(t,\hbar) = \sum_{n=0}^{\infty} \hbar^{2n}q_{2n}(t)
= q_{0}(t) + \hbar^{2} q_{2}(t)
+ \hbar^{4} q_{4}(t) + \cdots.
\end{gather}
It contains only even order terms of $\hbar$ since $\PI$
is invariant under $\hbar \mapsto - \hbar$.
The leading term $q_{0} = q_{0}(t)$ satisf\/ies
\begin{gather} \label{eq:q0}
6q_{0}^{2} + t = 0, \qquad \text{hence}~q_{0}(t)=\sqrt{-t/6},
\end{gather}
and the subleading terms are recursively determined by
\begin{gather} \label{eq:recursion-q-2k}
q_{2(k+1)}(t) = \frac{1}{12 q_{0}(t)} \left(
\frac{d^{2} q_{2k}}{dt^{2}}(t)  -
6 \sum_{k_{1}+k_{2} = k+1,\,  k_{i}>0} q_{2k_{1}}(t) q_{2k_{2}}(t)
\right), \qquad k \ge 1.
\end{gather}
As we will see, the coef\/f\/icients of the formal series appearing in
this paper are multivalued functions of $t$ and are def\/ined
on the Riemann surface of~$q_{0}$. Thus, in what follows,
we may use~$q_{0}$ instead of~$t$ when we express coef\/f\/icients.

The relation \eqref{eq:recursion-q-2k} implies
\begin{gather*}
q_{2k} =
c_{2k} q_{0}^{1-5k}, \qquad c_{2k} \in \bbC .
\end{gather*}
It is obvious that the coef\/f\/icients $q_{2k}(t)$ have
a singularity at $q_{0} = 0$ (i.e., $t = 0$).
This special point is called a {\em turning point} of $\PI$
\cite[Def\/inition 2.1]{KT-PI} (see also \cite[Section~4]{KT05}).
Throughout the paper, we assume the following:

\begin{ass}
The independent variable $t$ of $\PI$
lies on a domain that doesn't contain the origin.
\end{ass}

\begin{rem}
The formal solution \eqref{eq:formal-q} is called a
{\em $0$-parameter solution} of $\PI$ in \cite{KT05}
since it doesn't contain free parameters.
More general formal solutions having
one or two free para\-me\-ters (called 1- or 2-parameter solutions)
are constructed in~\cite{AKT-P} for all Painlev\'e equations
of second order.
See also~\cite{Umeta} for a construction of general formal
solutions of {\em higher} order Painlev\'e equations.
\end{rem}

\begin{rem}\sloppy
The formal solution \eqref{eq:formal-q} is in fact a~{\em divergent} series.
However, \cite[Theo\-rem~1.1]{Kamimoto-Koike} proved that
the formal solution is {\em Borel summable} when
$q_{0}$ satisf\/ies $q_{0} \ne 0$ and
\mbox{$\arg q_{0} \notin \{ \frac{2\ell}{5}\pi \,|\,
\ell \in {\mathbb Z} \}$}.
The exceptional set is called the {\em Stokes curve} of $\PI$.
(See \cite[Def\/ini\-tion~2.1]{KT-PI} for the notion of
Stokes curves of Painlev\'e equations
with a small parameter~$\hbar$.)
That is, there exists a function
which is analytic in $\hbar$ on a sectorial domain
with the center at the origin (which is also analytic in $t$)
such that~\eqref{eq:formal-q} is the asymptotic expansion
of the function for $\hbar \rightarrow 0$ in the sector.
The analytic function is called the {\em Borel sum}
of the formal series~\eqref{eq:formal-q}, and
it gives an analytic solution of $\PI$
(see~\cite{Costin08} for Borel summation method).
This particular asymptotic solution obtained by
the Borel summation method is called
the {\em tri-tronqu\'ee solution}
of $\PI$ (see~\cite{Joshi-Kitaev}), and the
{\em non-linear Stokes phenomena} on Stokes curves
are analyzed by \cite{FIKN, Kap, Takei}.
\end{rem}

\subsection[Isomonodromy system and the $\tau$-function]{Isomonodromy system and the $\boldsymbol{\tau}$-function}

It is known that $\PI$ describes the compatibility condition
for the following system of linear PDEs (cf.~\cite[Appendix C]{JM2}):
\begin{gather} \label{eq:Lax-matrix}
\hbar \frac{\p \Psi}{\p x}  = A \Psi, \qquad
\hbar \frac{\p \Psi}{\p t}  = B \Psi,
\end{gather}
where
\begin{gather*}
A   =
\begin{pmatrix}
A_{11} & A_{12} \\
A_{21} & A_{22}
\end{pmatrix}
:=
\begin{pmatrix}
  p
&
  4(x-q)
\\
\displaystyle x^{2} + q x + q^{2} + \frac{t}{2}
&
  - p
\end{pmatrix},
\\
B  =
\begin{pmatrix}
B_{11} & B_{12} \\
B_{21} & B_{22}
\end{pmatrix}
:=
\begin{pmatrix}
 0
&
   2
\\
\displaystyle \frac{x}{2} + q
&
  0
\end{pmatrix}.
\end{gather*}
The compatibility condition
\begin{gather*}
\hbar \frac{\p A}{\p t} - \hbar \frac{\p B}{\p x} + [A,B] = 0
\end{gather*}
is equivalent to the following Hamiltonian system
\begin{gather} \label{eq:Ham-PI}
\hbar \frac{dq}{dt} = \frac{\p H}{\p p}, \qquad
\hbar \frac{dp}{dt} = - \frac{\p H}{\p q},
\end{gather}
where the (time-dependent) Hamiltonian is given by
\begin{gather*}
H = H(q,p,t) := \frac{1}{2}p^{2} - 2 q^{3} - t q.
\end{gather*}
We can easily check that \eqref{eq:Ham-PI} and $\PI$ are equivalent.
The above system of linear ODEs
is called the {\em isomonodromy system} associated with $\PI$
(see~\cite{JMU, JM2}).

Let $(q,p) = (q(t,\hbar), p(t,\hbar))$ be a formal power series
solution of the Hamiltonian system~\eqref{eq:Ham-PI}; that is,
$q(t,\hbar)$ is the formal solution~\eqref{eq:formal-q}
of~$\PI$, and
\begin{gather*} 
p(t,\hbar) = \hbar \frac{d q(t,\hbar)}{dt} =
\sum_{n=0}^{\infty} \hbar^{2n+1} p_{2n+1}(t).
\end{gather*}
The corresponding Hamiltonian function is denoted by
\begin{gather} \label{eq:sigma}
\sigma(t,\hbar) := H\big(q(t,\hbar), p(t,\hbar), t\big).
\end{gather}
We can check that \eqref{eq:sigma}
is invariant under $\hbar \mapsto - \hbar$, and hence
it has the following expansion:
\begin{gather} \label{eq:parity-sigma}
\sigma(t,\hbar) = \sum_{n=0}^{\infty} \hbar^{2n} \sigma_{2n}(t).
\end{gather}

\begin{Def} [\cite{JMU, Okamoto}] \label{def:tau-function}
The {\em $\tau$-function}
(corresponding to the formal solution \eqref{eq:formal-q})
of $\PI$ is def\/ined by
\begin{gather} \label{eq:tau}
\hbar^{2} \frac{d}{dt} \log \tau(t,\hbar) = \sigma(t,\hbar)
\end{gather}
up to constant.
\end{Def}

The $\tau$-function can also be def\/ined in terms of
a solution of~\eqref{eq:Lax-matrix}~\cite{JMU} (see also
Appendix~\ref{section:relation-of-two-conjectures}).
The expansion~\eqref{eq:parity-sigma} implies that the $\tau$-function~\eqref{eq:tau} has an expansion of the form
\begin{gather*} 
\log \tau(t,\hbar) = \sum_{g=0}^{\infty} \hbar^{2g-2} \tau_{2g}(t).
\end{gather*} 

\subsection{Spectral curve}

In what follows, we assume that the formal solution
$(q(t,\hbar), p(t,\hbar))$ of~\eqref{eq:Ham-PI}
constructed above is substituted into the coef\/f\/icients
of the isomonodromy system \eqref{eq:Lax-matrix}.
Then, the coef\/f\/icients of the isomonodromy system has the following
$\hbar$-expansions:
\begin{gather*}
A   =   A_{0}(x,t) + \hbar A_{1}(x,t) + \hbar^{2} A_{2}(x,t) + \cdots, \\
B   =   B_{0}(x,t) + \hbar B_{1}(x,t) + \hbar^{2} B_{2}(x,t) + \cdots,
\end{gather*}
whose top terms are given by
\begin{gather*}
A_{0}(x,t)   =
\begin{pmatrix}
  0
&
 4(x-q_{0})
\\
\displaystyle x^{2} + q_{0} x + q_{0}^{2} + \frac{t}{2}
&
  0
\end{pmatrix}, \qquad
B_{0}(x,t)    =
\begin{pmatrix}
  0
&
  2
\\
\displaystyle \frac{x}{2}+q_{0}
&
   0
\end{pmatrix}.
\end{gather*}
Observe that, since $q_{0}$ satisf\/ies \eqref{eq:q0},
the algebraic curve def\/ined by
\begin{gather} \label{eq:sp-curve}
\det  (y - A_{0}(x,t)  ) =
y^{2} - 4  ( x-q_{0}  )^{2}  ( x+2q_{0}  )
= 0
\end{gather}
has {\em genus $0$}. Actually, this gives a family of algebraic
curves in ${\mathbb C}^{2}_{(x,y)}$
parametrized by $t$. Since we have assumed that
$t \ne 0$, $x=q_{0}$ and $x=-2q_{0}$ are distinct.

\begin{Def}
We call the algebraic curve~\eqref{eq:sp-curve}
{\em the semi-classical spectral curve}, or
{\em the spectral curve}
of (the f\/irst equation of) the isomonodromy system~\eqref{eq:Lax-matrix}.
\end{Def}

\begin{rem}
It is shown in \cite[Proposition~1.3]{KT-PI} that, for all (second order)
Painlev\'e equations with a~formal parameter $\hbar$,
the semi-classical spectral curves corresponding to
the same type of formal power series solution
as \eqref{eq:formal-q} have {\em genus $0$}.
\end{rem}

\begin{rem}
Since we are taking the semi-classical limit
(i.e., top term in $\hbar$-expansion),
our spectral curve \eqref{eq:sp-curve} is dif\/ferent from
usual spectral curves for isomonodromic deformation equations
discussed, e.g., in~\cite{Olsha, Takasaki}.
The spectral curves in the above papers have {\em higher genus}.
Recently, Nakamura~\cite{Nakamura} investigates
the geometry of genus~$2$ spectral curves which appear
in an {\em autonomous limit} of the
{\em 4th order Painlev\'e equations},
and use them to classify the Painlev\'e equations.
See \cite{KNS} for the list of 4th order Painlev\'e equations.
\end{rem}

\subsection{WKB analysis of isomonodromy system in scalar form}
\label{subsection:WKB-for-scalar-Lax}

Denote the unknown vector function of \eqref{eq:Lax-matrix}
by $\Psi = {}^t(\psi_1, \psi_2)$.
Then, $\psi = \psi_1$ satisf\/ies the
following scalar version of isomonodromy system
\begin{gather}
\left( \left(\hbar\frac{\p}{\p x}\right)^{2} +
f \left(\hbar\frac{\p}{\p x}\right) + g \right)\psi = 0, \nonumber\\
\hbar \frac{\p \psi}{\p t} = \frac{1}{2(x-q)} \left(
\hbar \frac{\p \psi}{\p x} - p \psi \right),\label{eq:Lax-scalar}
\end{gather}
where
\begin{gather*}
f = f(x,t,\hbar)   :=   -\operatorname{tr} A - \hbar \frac{\p}{\p x}\log A_{12}
= -\hbar \frac{1}{x-q},  \\
g = g(x,t,\hbar)   :=   \det A - \hbar \frac{\p A_{11}}{\p x} +
\hbar A_{11} \frac{\p}{\p x}\log A_{12}   \\
\hphantom{g}{}
  =
- \big(4x^{3} + 2tx + p^{2} - 4q^{3} - 2tq\big) + \hbar \frac{p}{x-q}.
\end{gather*}
The coef\/f\/icients of $f$ and $g$ have an $\hbar$-expansion since
$q$ and $p$ are contained in them
\begin{gather} \label{eq:hbar-correction-in-quantization}
f   =   - \hbar  \frac{1}{x-q_{0}}
 +\hbar^{3} \frac{1}{1728q_{0}^{4}(x-q_{0})^{2}} + \hbar^{5}
 \frac{49x-51q_{0}}{5971968q_{0}^{9}(x-q_{0})^{3}} + \cdots,
\\  \label{eq:hbar-correction-in-quantization2}
g   =   -4(x-q_{0})^{2}(x+2q_{0})
 - \hbar^{2} \frac{x+11q_{0}}{144q_{0}^{2}(x-q_{0})}
 - \hbar^{4} \frac{7x^{2}+34q_{0}x-53q_{0}^{2}}
  {248832q_{0}^{7}(x-q_{0})^{2}}+\cdots .
\end{gather}
The top term of $g$ appears in the def\/ining equation of
the spectral cur\-ve~\eqref{eq:sp-curve},
and its zeros are called {\em turning points}
of the f\/irst equation of~\eqref{eq:Lax-scalar}
in the WKB analysis.
In particular, under the assumption $t \ne 0$, there is
\begin{itemize}\itemsep=0pt
\item
a {\em simple} turning point at $x=-2q_{0}$
which is a branch point
of the spectral curve \eqref{eq:sp-curve}, and
\item
a {\em double} turning point at $x = q_{0}$
which is a singular point of
the spectral curve \eqref{eq:sp-curve}.
\end{itemize}
Consider the {\em Riccati equation}
\begin{gather} \label{eq:Riccati}
\hbar^{2} \left( P^{2} + \frac{\p P}{\p x} \right)
+ f \hbar P + g = 0.
\end{gather}
This is equivalent to the f\/irst equation
in \eqref{eq:Lax-scalar} by
\begin{gather*}
\psi = \exp\left( \int^{x} Pdx \right), \qquad
 \text{i.e.}, \quad P = \frac{1}{\psi} \frac{\p \psi}{\p x}  .
\end{gather*}
Let
\begin{gather*}
P^{(\pm)}(x,t,\hbar) = \sum_{m=0}^{\infty}\hbar^{m-1}P^{(\pm)}_{m}(x,t)
\end{gather*}
be the formal solutions of \eqref{eq:Riccati} with the top term
\begin{gather*} 
P^{(\pm)}_{0}(x,t) = \pm 2(x-q_{0})\sqrt{x+2q_{0}}.
\end{gather*}
The coef\/f\/icients $P^{(\pm)}_{m}(x,t)$ are recursively determined by
\begin{gather} \label{eq:recursion-for-Pm}
2P^{(\pm)}_{0}P^{(\pm)}_{m+1} +
\sum_{\substack{a+b=m+1 \\ a,b \ge 1}}
P^{(\pm)}_{a} \big(P^{(\pm)}_{b} + f_{b}\big) +
\frac{\p P^{(\pm)}_{m}}{\p x} + g_{m+1} = 0 \qquad \text{for} \quad m \ge 0,
\end{gather}
where $f_{a}$ and $g_{a}$ are the coef\/f\/icient of $\hbar^{a}$
in $f$ and $g$, respectively. Explicit forms of the f\/irst few terms
are given by
\begin{gather*}
P^{(\pm)}_{1}   =   -\frac{1}{4(x+2q_{0})}, \qquad
P^{(\pm)}_{2}   =   \pm \frac{x+17q_{0}}{576q_{0}^{2}(x+2q_{0})^{5/2}}, \qquad
P^{(\pm)}_{3}   =   -\frac{2x^{2}+20q_{0}x+77q_{0}^{2}}
{6912q_{0}^{4}(x+2q_{0})^{4}}, \\
P^{(\pm)}_{4}   =   \pm \frac{28x^{4}+500q_{0}x^{3}+
3684q_{0}^{2}x^{2}+14273q_{0}^{3}x+27307q_{0}^{4}}
{3981312 q_{0}^{7}(x+2q_{0})^{11/2}}.
\end{gather*}
It is obvious from~\eqref{eq:recursion-for-Pm}
that $P^{(\pm)}_{m}(x,t)$ are holomorphic except
at the turning points and $x = \infty$
(and multivalued for even $m$).
It also follows from the recursion relation~\eqref{eq:recursion-for-Pm} that
\begin{gather} \label{eq:asymp-P}
P^{(\pm)}(x,t,\hbar)
= \pm \left(\frac{2}{\hbar}x^{3/2} + \frac{t}{2\hbar}x^{-1/2}
\mp \frac{1}{4}x^{-1} + \frac{\sigma(t,\hbar)}{2\hbar} x^{-3/2}
+ O\big(x^{-2}\big) \right)
\end{gather}
holds when $x \rightarrow \infty$.

\begin{rem} \label{rem:asymptotics}
We can check that $P^{(\pm)}_{m}(x,t)$'s
have the following asymptotic expansion for \mbox{$x \rightarrow \infty$}
\begin{gather*}
P^{(\pm)}_{0}(x,t)   =   \pm \left( 2x^{3/2} + \frac{t}{2}x^{-1/2}
+ O\big(x^{-3/2}\big) \right), \qquad
P^{(\pm)}_{1}(x,t)  =   -\frac{1}{4}x^{-1} + O\big(x^{-3/2}\big), \\
P^{(\pm)}_{m}(x,t)   =    O\big(x^{-3/2}\big) \qquad \text{for} \quad m \ge 2,
\end{gather*}
and we have \eqref{eq:asymp-P} after summing up
$\hbar^{m-1}P^{(\pm)}_{m}(x,t)$. Once you know that $P^{(\pm)}(x,t,\hbar)$
has an asymptotic expansion in this sense, subleading terms in
\eqref{eq:asymp-P} can be computed from
the Riccati equation \eqref{eq:Riccati}.
\end{rem}

Def\/ine
\begin{gather*}
P_{\rm odd}(x,t,\hbar)   :=   \frac{1}{2}
\big( P^{(+)}(x,t,\hbar) - P^{(-)}(x,t,\hbar) \big), \\
P_{\rm even}(x,t,\hbar)   :=   \frac{1}{2}
\big( P^{(+)}(x,t,\hbar) + P^{(-)}(x,t,\hbar) \big).
\end{gather*}
It is easy to check that (cf.~\cite[Section~2]{KT05})
\begin{gather} \label{eq:odd-even-relation}
P_{\rm even}(x,t,\hbar) =
-\frac{1}{2}\frac{\p}{\p x}
\log \frac{ \hbar P_{\rm odd}(x,t,\hbar)}
{2(x-q(t,\hbar))}
\end{gather}
and
\begin{gather} \label{eq:parity}
P_{\rm odd}(\sigma(x),t,\hbar) = - P_{\rm odd}(x,t,\hbar)
\end{gather}
hold. Here $x$ is regarded as a coordinate on the spectral curve,
and $\sigma$ is the covering involution for the spectral curve:
$P^{(+)}(\sigma(x),t) = - P^{(-)}(x,t)$.

Since
\begin{gather*}
\frac{ \hbar P_{\rm odd}(x,t,\hbar)}{2(x-q(t,\hbar))} =
\sqrt{x+2q_{0}} \big( 1 + O(\hbar) \big),
\end{gather*}
the right hand-side of~\eqref{eq:odd-even-relation}
is the derivative of the formal power series
\begin{gather*}
-\frac{1}{2}\log\frac{ \hbar P_{\rm odd}(x,t,\hbar)}{2(x-q(t,\hbar))} =
-\frac{1}{4}\log(x+2q_{0}) + O(\hbar).
\end{gather*}
Thus the ambiguity of the branch of the logarithm
only appears in the top term, but
we care about the ambiguity since
it doesn't matter in our computation.

The following theorem was applied in the
{\it transformation theory of Painlev\'e equations}
in~\cite{KT-PI}. We will use the fact
in the proof of our main theorem.

\begin{thm}[\protect{cf.~\cite[Proposition~1.2 and Theorem~1.1]{KT-PI}}]
\label{thm:Podd-and-t-derivative} \quad
\begin{itemize}\itemsep=0pt
\item[$(i)$]
The formal series $P^{(\pm)}(x,t,\hbar)$ satisfies
\begin{gather} \label{eq:dt-P}
\hbar \frac{\p}{\p t}P^{(\pm)}(x,t,\hbar) = \frac{\p}{\p x}\left(
\frac{\hbar P^{(\pm)}(x,t,\hbar) - p(t,\hbar)}{2(x-q(t,\hbar))}
\right).
\end{gather}
In particular, $P_{\rm odd}(x,t,\hbar)$ satisfies
\begin{gather} \label{eq:dt-Podd}
\frac{\p}{\p t}P_{\rm odd}(x,t,\hbar) =
\frac{\p}{\p x}\left( \frac{P_{\rm odd}(x,t,\hbar)}{2(x-q(t,\hbar))}
\right).
\end{gather}

\item[$(ii)$]
All coefficients of $P^{(\pm)}(x,t,\hbar)$ are holomorphic
except at the simple turning point $x=-2q_{0}$ and $x= \infty$.
In particular, they are holomorphic at
the double turning point $x=q_{0}$.

\item[$(iii)$]
The formal series
\begin{gather}
\psi_{\pm}(x,t,\hbar)   :=    \exp\left(
\pm \int^{x}_{v} P_{\rm odd}(x',t,\hbar)dx'
- \frac{1}{2} \log \frac{ \hbar P_{\rm odd}(x,t,\hbar)}
{2(x-q(t,\hbar))}\right)  \nonumber \\
 \hphantom{\psi_{\pm}(x,t,\hbar)}{} =
\left(\frac{2(x-q(t,\hbar))}{\hbar P_{\rm odd}(x,t,\hbar)} \right)^{1/2}
\exp\left( \pm \int^{x}_{v} P_{\rm odd}(x',t,\hbar) dx' \right)\label{eq:WKB-IM}
\end{gather}
satisfies the isomonodoromy system~\eqref{eq:Lax-scalar}.
Here $v$ is the simple turning point $-2q_{0}$.
The integral from $v$ is defined by
\begin{gather} \label{eq:contour-integral-Podd}
\int^{x}_{v} P_{\rm odd}(x',t,\hbar) dx' =
\frac{1}{2} \int_{\gamma_x} P_{\rm odd}(x',t,\hbar) dx',
\end{gather}
where the path $\gamma_x$ is depicted in Fig.~{\rm \ref{fig:contour}}
$($cf.~{\rm \cite[Section~2]{KT05})}.
\end{itemize}
\end{thm}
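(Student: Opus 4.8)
The plan is to prove the three parts in the logical order $(i)$, $(ii)$, $(iii)$, treating $(i)$ as the deformation identity, $(ii)$ as a holomorphy statement about the WKB coefficients, and $(iii)$ as a consequence of the first two. For $(i)$, I would differentiate the Riccati equation \eqref{eq:Riccati} with respect to $t$. The coefficients $f$ and $g$ depend on $t$ only through the formal solution $(q(t,\hbar),p(t,\hbar))$, which by \eqref{eq:Ham-PI} satisfies $\hbar\,\partial_t q=p$ and $\hbar\,\partial_t p=6q^{2}+t$; substituting these for the $t$-derivatives of the coefficients turns $\partial_t$ of \eqref{eq:Riccati} into a linear first-order equation in $x$ for $\hbar\,\partial_t P^{(\pm)}$. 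A direct computation shows that $\partial_x\big((\hbar P^{(\pm)}-p)/(2(x-q))\big)$ solves the very same linear equation and has the same leading order in $\hbar$; since that (linearized Riccati) equation has a unique formal solution once its leading term is fixed, the two sides of \eqref{eq:dt-P} coincide. Conceptually this is nothing but the compatibility $\hbar\,\partial_t A-\hbar\,\partial_x B+[A,B]=0$ of the Lax pair \eqref{eq:Lax-matrix}, valid because $(q,p)$ solves \eqref{eq:Ham-PI}, once descended to scalar form. The relation \eqref{eq:dt-Podd} for $P_{\rm odd}$ then follows simply by subtracting the $P^{(+)}$ and $P^{(-)}$ instances of \eqref{eq:dt-P}, the common $p$-term cancelling.

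The principal obstacle is $(ii)$. The leading amplitude $P^{(\pm)}_{0}=\pm2(x-q_0)(x+2q_0)^{1/2}$ vanishes simply at the double turning point $x=q_0$, and the coefficients of \eqref{eq:Riccati} have poles there of arbitrarily high order as formal series in $\hbar$ (since $f=-\hbar/(x-q)$ and $g$ carries the term $\hbar p/(x-q)$, while $q=q_0+\hbar^{2}q_2+\cdots$ makes $(x-q)^{-1}$ expand, order by order, into higher-order poles at $x=q_0$). I would prove holomorphy at $x=q_0$ by induction on $m$. Collecting the coefficient of $\hbar^{m+1}$ in \eqref{eq:Riccati} gives $P^{(\pm)}_{m+1}=-\mathrm{RHS}_m/(2P^{(\pm)}_0)$, where $\mathrm{RHS}_m$ contains in particular the term $f_{m+1}P^{(\pm)}_0$ coming from pairing $f$ with the leading amplitude; dividing by the simple zero of $P^{(\pm)}_0$ threatens a pole at $x=q_0$, and the content of the statement is that the apparent pole of $\mathrm{RHS}_m$ at $x=q_0$ cancels. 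Already at $m=0$ one sees the mechanism: $-\partial_x P^{(\pm)}_0/(2P^{(\pm)}_0)$ carries a simple pole $-1/(2(x-q_0))$ that is annihilated exactly by $-f_1/2=1/(2(x-q_0))$, leaving the regular $P^{(\pm)}_1=-1/(4(x+2q_0))$. The inductive step requires tracking the Laurent tails at $x=q_0$ of every term in $\mathrm{RHS}_m$ and showing, from the explicit shape of $f$ and of the $(x-q)^{-1}$-part of $g$, that they sum to a quantity vanishing to the order needed to absorb the zero of $P^{(\pm)}_0$. The clean conceptual reason is that $x=q_0$, though a singular point of the spectral curve \eqref{eq:sp-curve}, is an \emph{apparent} singularity of the scalar operator in \eqref{eq:Lax-scalar} carrying no monodromy, so that the logarithmic derivative $P^{(\pm)}$ of a local solution is holomorphic there; this is the substance of \cite[Theorem~1.1]{KT-PI}, whose analysis I would follow.

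For $(iii)$, I would first note that by \eqref{eq:odd-even-relation} the prefactor in \eqref{eq:WKB-IM} equals $\exp\int^{x}P_{\rm even}\,dx'$, so that $\psi_\pm=\exp\int^{x}_{v}(P_{\rm even}\pm P_{\rm odd})\,dx'=\exp\int^{x}_{v}P^{(\pm)}\,dx'$ up to an $x$-independent constant; hence $\psi_\pm$ solves the first equation of \eqref{eq:Lax-scalar} directly from the definition of $P^{(\pm)}$ as a Riccati solution. That it also solves the second equation is obtained by integrating $(i)$ in $x$: \eqref{eq:dt-P} gives $\hbar\,\partial_t\log\psi_\pm=(\hbar P^{(\pm)}-p)/(2(x-q))$ up to a term independent of $x$, which is precisely the second equation modulo an $x$-independent normalization $\exp(c(t,\hbar))$ that does not affect being a solution. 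The only delicate point is the lower endpoint $v=-2q_0$, a branch point at which $P^{(\pm)}$ is singular; here the contour prescription \eqref{eq:contour-integral-Podd} together with the parity \eqref{eq:parity} of $P_{\rm odd}$ under the covering involution makes the (square-root-integrable) integral and its $t$-differentiation well defined and lets the endpoint contribution be fixed consistently, while the holomorphy from $(ii)$ guarantees that no spurious singularity at $x=q_0$ obstructs the construction.
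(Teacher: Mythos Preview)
Your treatment of $(i)$ is correct and in fact more self-contained than the paper's, which simply defers both \eqref{eq:dt-P} and the holomorphicity of $P_{\rm odd}$ at $x=q_0$ to \cite{KT-PI}. For $(ii)$ you propose a direct induction on the Riccati recursion \eqref{eq:recursion-for-Pm}, tracking Laurent tails at $x=q_0$; the paper takes a different and shorter route: granting that each coefficient of $P_{\rm odd}$ is holomorphic at $x=q_0$, equation \eqref{eq:dt-Podd} shows that $\p_x\bigl(P_{\rm odd}/(2(x-q))\bigr)=\p_t P_{\rm odd}$ is holomorphic there, hence so is $P_{\rm odd}/(x-q)$, and then \eqref{eq:odd-even-relation} immediately yields holomorphicity of $P_{\rm even}$. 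This bootstrap via $(i)$ avoids the pole-cancellation bookkeeping your induction would require.

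There is, however, a genuine gap in your $(iii)$. You assert that integrating \eqref{eq:dt-P} in $x$ yields the second equation of \eqref{eq:Lax-scalar} ``modulo an $x$-independent normalization $\exp(c(t,\hbar))$ that does not affect being a solution.'' This is false: multiplying $\psi_\pm$ by $\exp(c(t,\hbar))$ shifts $\hbar\,\p_t\log\psi_\pm$ by $\hbar\,c'(t,\hbar)$, so a $t$-dependent normalization \emph{does} change whether the $t$-equation holds. The theorem claims that the specific $\psi_\pm$ of \eqref{eq:WKB-IM}, with the integral normalized via the contour \eqref{eq:contour-integral-Podd}, solves the system exactly, and the paper stresses in the remark following the proof that this normalization is essential. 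The missing argument is that the contour prescription kills the integration constant: since $\gamma_x$ runs from $\sigma(x)$ to $x$ and $P_{\rm odd}/(x-q)$ is odd under the involution $\sigma$ by \eqref{eq:parity}, one gets
\[
\int^{x}_{v}\p_t P_{\rm odd}\,dx' \;=\; \tfrac12\int_{\gamma_x}\p_{x'}\!\Bigl(\frac{P_{\rm odd}}{2(x'-q)}\Bigr)dx' \;=\; \frac{P_{\rm odd}(x)}{2(x-q)}
\]
with no additive term; combining this with the $t$-derivative of the prefactor (computed directly from \eqref{eq:dt-Podd}) then gives the second equation on the nose. As written, your argument only shows that \emph{some} renormalization of $\psi_\pm$ solves \eqref{eq:Lax-scalar}, not the one actually defined in \eqref{eq:WKB-IM}.
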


\begin{figure}[t]\centering
 \includegraphics{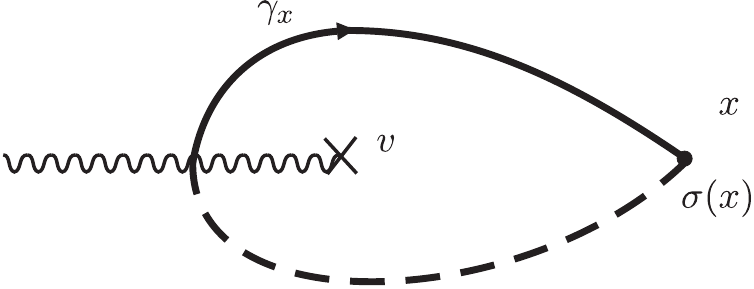}
 \caption{For a~given~$x$, the path~$\gamma_{x}$ starts from
 the point~$\sigma(x)$ and ends at~$x$.
 The wiggly lines designate a branch cut, and the solid (resp.\ dotted) part
 represents a~part of path on the f\/irst (resp.\ the second)
 sheet of the spectral curve.}
 \label{fig:contour}
 \end{figure}

\begin{proof}
Although the scalar version of isomonodromy system \eqref{eq:Lax-scalar}
is dif\/ferent from that used in~\cite{KT-PI}, they are
related by a gauge transformation $\psi \mapsto (x-q)^{1/2} \psi$.
Therefore, the equalities~\eqref{eq:dt-P} and~\eqref{eq:dt-Podd}
in~(i) together with the holomorphicity of each coef\/f\/icient
of $P_{\rm odd}(x,t,\hbar)$ at $x=q_{0}$ follows from
\cite[Proposition~1.2 and Theorem~1.1]{KT-PI}.
Then, it turns out that the coef\/f\/icients of
$P_{\rm odd}(x,t,\hbar)/(x-q(t,\hbar))$
are also holomorphic due to~\eqref{eq:dt-Podd}.
Then, \eqref{eq:odd-even-relation}
implies that each coef\/f\/icient of $P_{\rm even}(x,t,\hbar)$ is also
holomorphic at $x=q_{0}$.
Thus we have proved~(ii).

The claim (iii) follows from a straightforward computations
\begin{gather*}
\frac{1}{\psi_{\pm}} \frac{\p \psi_{\pm}}{\p x}  =
\pm P_{\rm odd} -\frac{1}{2}\frac{\p}{\p x} \log
\left(\frac{ \hbar P_{\rm odd}}{2(x-q)}\right)
= \pm P_{\rm odd} + P_{\rm even}
= P^{(\pm)}, \\
\hbar\frac{1}{\psi_{\pm}} \frac{\p \psi_{\pm}}{\p t}   =
\frac{1}{2}\left( \frac{-\hbar (dq/dt)}{x-q} - \frac{\hbar}{P_{\rm odd}}
\frac{\p P_{\rm odd}}{\p t} \right) \pm \int^{x}_{v}
\hbar\frac{\p P_{\rm odd}}{\p t} dx \\
\hphantom{\hbar\frac{1}{\psi_{\pm}} \frac{\p \psi_{\pm}}{\p t}  }{}
  =
-\frac{p}{2(x-q)} - \frac{\hbar}{P_{\rm odd}}
\left( \frac{1}{2(x-q)}\frac{\p P_{\rm odd}}{\p x} -
\frac{P_{\rm odd}}{2(x-q)^{2}} \right) \pm \hbar \frac{P_{\rm odd}}{2(x-q)} \\
\hphantom{\hbar\frac{1}{\psi_{\pm}} \frac{\p \psi_{\pm}}{\p t}  }{} =
-\frac{p}{2(x-q)} + \frac{\hbar P^{(\pm)}}{2(x-q)} =
\frac{1}{2(x-q)} \left( \frac{\hbar}{\psi_{\pm}}
\frac{\p \psi_{\pm}}{\p x} - p \right).\tag*{\qed}
\end{gather*}
\renewcommand{\qed}{}
\end{proof}

As we will see below, an {\em isomonodromic WKB solution}
such as~\eqref{eq:WKB-IM}
is constructed from just a family of algebraic curves~\eqref{eq:sp-curve}
by {\em the topological recursion}~(\cite{EO07}).
In particular, the f\/irst equation
in \eqref{eq:Lax-scalar} gives a {\em quantization}
of the spectral curve~\eqref{eq:sp-curve}
in the sense of~\cite{DM14, DM14-2}.

\begin{rem}
In the above computation the normalization \eqref{eq:WKB-IM}
is essential. Since $P_{\rm odd}$ is anti-invariant under the
covering involution $\sigma$ as \eqref{eq:parity} and
the integral in~\eqref{eq:WKB-IM} is
def\/ined as a contour integral~\eqref{eq:contour-integral-Podd},
we don't need to take care of the branch point~$v$ in the computation
\begin{gather*}
\int^{x}_{v} \frac{\partial P_{\rm odd}}{\partial t}dx =
\frac{1}{2}\int^{x}_{\sigma(x)}
\frac{\p}{\p x}\left( \frac{P_{\rm odd}}{2(x-q)}\right)dx
= \frac{P_{\rm odd}}{2(x-q)}.
\end{gather*}
\end{rem}

\begin{rem}
We can also construct a WKB-type formal solution
of matrix isomonodromy system \eqref{eq:Lax-matrix}.
Def\/ine
\begin{gather*}
\tilde{\psi}_{\pm}(x,t,\hbar)   =
\frac{ \hbar \frac{d \psi_{\pm}}{dx}(x,t,\hbar) - A_{11}(x,t,\hbar)
\psi_{\pm}(x,t,\hbar)}{A_{12}(x,t,\hbar)}  \\
\hphantom{\tilde{\psi}_{\pm}(x,t,\hbar) }{}
  =
\frac{\hbar P^{(\pm)}(x,t,\hbar) - A_{11}(x,t,\hbar)}
{A_{12}(x,t,\hbar)} \psi_{\pm}(x,t,\hbar).
\end{gather*}
Then, the matrix valued formal series
\begin{gather} \label{eq:WKB-IM-mat}
\Psi(x,t,\hbar)   =
\begin{pmatrix}
\psi_{+}(x,t,\hbar) & \psi_{-}(x,t,\hbar) \\[+.7em]
\tilde{\psi}_{+}(x,t,\hbar) & \tilde{\psi}_{-}(x,t,\hbar)
\end{pmatrix}
\end{gather}
gives a fundamental formal solution of the
isomonodoromy system~\eqref{eq:Lax-matrix}.
\end{rem}

\section{Topological recursion and quantum curve theorem}\label{section3}

In this section we review the Eynard--Orantin's
topological recursion~\cite{EO07} for our
spectral cur\-ve~\eqref{eq:sp-curve},
and formulate our main theorem.

\subsection{Topological recursion}
\label{section:top-recursion}

The topological recursion is an algorithm associating some
dif\/ferential forms $W_{g,n}$ and num\-bers~$F_{g}$
given the following source data:
\begin{itemize}\itemsep=0pt
\item
A plane curve $({\mathcal C}, x, y)$: ${\mathcal C}$ is a
compact Riemann surface, $x, y\colon {\mathcal C} \rightarrow {\mathbb P}^{1}$
are meromorphic functions.
\item
The Bergman kernel $B$:
It is a symmetric dif\/ferential form on ${\mathcal C} \times {\mathcal C}$
with poles of order 2 along the diagonal, and
satisfying some normalization conditions.
\end{itemize}

In our case, ${\mathcal C} = {\mathbb P}^1$ and $x, y$ are
rational functions which parametrize the spectral curve~\eqref{eq:sp-curve}
\begin{gather} \label{eq:par-rep}
x(z) = z^{2} - 2 q_{0}, \qquad
y(z) = 2z(z^{2} - 3q_{0}).
\end{gather}
Here $z$ is a coordinate on ${\mathbb P}^1$.
The Bergman kernel is given by
\begin{gather*}
B(z_1, z_2) = \frac{dz_{1}dz_{2}}{(z_{1}-z_{2})^{2}},
\end{gather*}
since the spectral curve is of genus $0$.
Zeros of~$dx$ are called {\em ramification points} of the
spectral curve~\eqref{eq:par-rep}.
Our spectral curve has only one ramif\/ication point at $z=0$.

The topological recursion for our spectral curve~\eqref{eq:par-rep}
is formulated as follows (see~\cite{EO07} for general case):

\begin{Def}[\protect{\cite[Def\/inition~4.2]{EO07}  (see also~\cite[Section~3]{DM14})}]
The {\em Eynard--Orantin differential}
$W_{g,n}(z_{1},\dots,z_{n})$ of type $(g,n)$
is a meromorphic $n$-dif\/ferential on the $n$-times product
of the spectral curve~\eqref{eq:par-rep} def\/ined by
the following {\em topological recursion relation}:
\begin{itemize}\itemsep=0pt
\item
for $2g-2+n \le 0$:
\begin{gather*}
W_{0,1}(z_{1})   :=   y(z_{1}) dx(z_{1})  = 4z_{1}^{2} \big(z_{1}^{2}-3q_{0}\big)dz_{1}  , \\
W_{0,2}(z_{1},z_{2})   :=   B(z_1, z_2)   = \frac{dz_{1}dz_{2}}{(z_{1}-z_{2})^{2}} ,
\end{gather*}
\item %
for $2g-2+n = 1$:
\begin{gather*}
W_{0,3}(z_{1},z_{2},z_{3})   :=   \frac{1}{2\pi i}
\oint_{\gamma_{0}} K(z,z_{1}) \big[ W_{0,2}(z,z_{2}) W_{0,2}(\bar{z},z_{3})
+ W_{0,2}(z,z_{3}) W_{0,2}(\bar{z},z_{2}) \big], \\
W_{1,1}(z_{1})   :=   \frac{1}{2\pi i}
\oint_{\gamma_{0}} K(z,z_{1}) W_{0,2}(z,\bar{z}),
\end{gather*}
\item %
for $2g-2+n \ge 2$:
\begin{gather}
W_{g,n}(z_{1},\dots,z_{n}) := \frac{1}{2\pi i}
\oint_{\gamma_{0}} K(z,z_{1})   \nonumber \\
 \qquad{}\times  \Biggl[ \sum_{j=2}^{n}
\big( W_{0,2}(z,z_{j}) W_{g,n-1}\big(\bar{z}, z_{[\hat{1},\hat{j}]}\big) +
W_{0,2}(\bar{z},z_{j}) W_{g,n-1}\big(z, z_{[\hat{1},\hat{j}]}\big) \big) \nonumber \\
\qquad{}+ W_{g-1,n+1}\big(z,\bar{z},z_{[\hat{1}]}\big) +
\sum_{\substack{g_{1}+g_{2}=g \\ I \sqcup J = [\hat{1}]}}^{\text{stable}}
W_{g_{1}, |I|+1}(z,z_{I}) W_{g_{2}, |J|+1}(\bar{z}, z_{J}) \Biggr].\label{eq:top-rec}
\end{gather}
\end{itemize}
Here $\gamma_{0}$ is a small cycle (in $z$-plane) which encircles
the ramif\/ication point $z=0$ in the counter-clockwise direction,
$\bar{z} = - z$ is the conjugate of $z$ near the ramif\/ication point,
and the {\em recursion kernel}~$K(z,z_{1})$ is given by
\begin{gather*} 
K(z,z_{1}) = - \frac{\omega^{\bar{z}-z}(z_{1})}{2(y(z) - y(\bar{z}))dx(z)}, \qquad
\omega^{\bar{z}-z}(z_{1}) = \int^{\bar{z}}_{z} W_{0,2}( \cdot, z_{1}).
\end{gather*}
Also, we use the index convention
$[\hat{j}] = \{1, \dots, n \}{\setminus}\{j\}$ and so on.
Lastly, the sum in the third line of~\eqref{eq:top-rec} is taken
for indices in the stable range
(i.e., only $W_{g,n}$'s with $2g - 2 + n \ge 1$ appear).
\end{Def}

The explicit form of some of Eynard--Orantin dif\/ferentials
are given as follows
\begin{gather*}
W_{0,3}   =   \frac{1}{12q_{0}z_{1}^{2}z_{2}^{2}z_{3}^{2}}
dz_{1}dz_{2}dz_{3}, \\
W_{0,4}   =   \frac{z_{1}^{2}z_{2}^{2}z_{3}^{2}z_{4}^{2} +
3q_{0} (z_{1}^{2}z_{2}^{2}z_{3}^{2}+z_{2}^{2}z_{3}^{2}z_{4}^{2}+
z_{3}^{2}z_{4}^{2}z_{1}^{2}+
z_{4}^{2}z_{1}^{2}z_{2}^{2})}{144q_{0}^{3}
z_{1}^{4}z_{2}^{4}z_{3}^{4}z_{4}^{4}} dz_{1}dz_{2}dz_{3}dz_{4}, \\
W_{1,1}   =   \frac{z_{1}^{2} + 3q_{0}}
{288 q_{0}^{2} z_{1}^{4}} dz_{1},
\\
W_{1,2}   =   \frac{ 2z_{1}^{4}z_{2}^{4} +
6q_{0}(z_{1}^{4}z_{2}^{2} + z_{1}^{2} z_{2}^{4}) +
3q_{0}^{2} (5z_{1}^{4} + 3 z_{1}^{2}z_{2}^{2} + 5z_{2}^{4})  }
{3456 q_{0}^{4} z_{1}^{6}z_{2}^{6}} dz_{1}dz_{2}, \\
W_{2,1}   =   \frac{ 28 z_{1}^{8} + 84q_{0} z_{1}^{6} + 252q_{0}^{2}z_{1}^{4}
+ 609q_{0}^{3}z_{1}^{2}  + 945 q_{0}^{4}}
{1990656q_{0}^{7} z_{1}^{10}}dz_{1}.
\end{gather*}

Eynard--Orantin dif\/ferentials have
the following properties (see~\cite{EO07}):
\begin{itemize}\itemsep=0pt
\item
As a dif\/ferential form on each variable $z_{i}$,
$W_{g,n}$, for $2g-2+n \ge 1$, is {\em holomorphic}
except for the ramif\/ication point $0$ and may
have a pole at $0$.
\item
$W_{g,n}$ is {\em symmetric}; that is, they are invariant
under any permutation of variables.
\item
For $2g-2+n \ge 1$, $W_{g,n}$
is {\em anti-invariant} under the involution
$z_{i} \mapsto \bar{z}_{i}$ for each variable:
\begin{gather*} 
W_{g,n}(z_{1},\dots,\bar{z}_{j},\dots,z_{n}) = -
W_{g,n}(z_{1},\dots,z_{j},\dots,z_{n}) \qquad
\text{for} \quad j = 1,\dots,n.
\end{gather*}

\item
$W_{g,n}$ is also {\em holomorphic} in $t$ except for
$t = 0$ (i.e., $q_{0} = 0$).
There is a formula for the derivative of $W_{g,n}$
with respect to $t$; see Section~\ref{section:variation-formula}.
\end{itemize}

\subsection{Quantum curve theorem}
\label{section:quantum-curve-conjecture}

In this section we describe our main result which claims
that the scalar isomonodromy system~\eqref{eq:Lax-scalar}
gives a {\em quantum curve}.

\begin{Def}
For $g \ge 0, n \ge 1$ satisfying $2g-2+n \ge 1$,
def\/ine {\em open free energy} of type ($g,n$) by
\begin{gather} \label{eq:open-Fgn}
F_{g,n}(z_{1},\dots,z_{n}) :=
\frac{1}{2^{n}} \int^{z_{1}}_{\bar{z}_{1}}\cdots
\int^{z_{n}}_{\bar{z}_{n}} W_{g,n}(z_{1},\dots,z_{n}).
\end{gather}
\end{Def}

It follows from the def\/inition that open free energies satisfy
\begin{gather*}
d_{z_{1}}\cdots d_{z_{n}} F_{g,n}(z_{1},\dots,z_{n})   =
W_{g,n}(z_{1},\dots,z_{n}), \\
F_{g,n}(z_{1},\dots,\bar{z}_{j}, \dots,z_{n})   =
- F_{g,n}(z_{1},\dots, z_{j}, \dots,z_{n})
\qquad \text{for} \quad j=1,\dots,n.
\end{gather*}
Explicit computation shows that
\begin{gather*}
F_{0,3}(z_{1},z_{2},z_{3})   =   -\frac{1}{12q_{0}z_{1}z_{2}z_{3}}, \\
F_{0,4}(z_{1},z_{2},z_{3},z_{4})   =
\frac{z_{1}^{2}z_{2}^{2}z_{3}^{2}z_{4}^{2} + q_{0}\bigl(
z_{1}^{2}z_{2}^{2}z_{3}^{2} + z_{2}^{2}z_{3}^{2}z_{4}^{2} +
z_{3}^{2}z_{4}^{2}z_{1}^{2} + z_{4}^{2}z_{1}^{2}z_{2}^{2} \bigr)}
{144q_{0}^{3}z_{1}^{3}z_{2}^{3}z_{3}^{3}z_{4}^{3}},
\\
F_{1,1}(z_{1})  =  -\frac{z_{1}^{2}+q_{0}}{288q_{0}^{2} z_{1}^{3}}, \\
F_{1,2}(z_{1},z_{2}) =
\frac{2z_{1}^{4}z_{2}^{4} +
2q_{0}\bigl( z_{1}^{4}z_{2}^{2} + z_{1}^{2}z_{2}^{4} \bigr) +
q_{0}^{2}\bigl( 3z_{1}^{4} + z_{1}^{2}z_{2}^{2} + 3z_{2}^{4} \bigr)}
{3456q_{0}^{4}z_{1}^{5}z_{2}^{5}}, \\
F_{2,1}(z_{1})  =
-\frac{140z_{1}^{8} + 140 q_{0} z_{1}^6 + 252 q_{0}^{2} z_{1}^4
+ 435 q_{0}^{3} z_{1}^2 + 525 q_{0}^{4}}
{9953280 q_{0}^{7} z_{1}^{9}}.
\end{gather*}
We also introduce functions $\{ S_{m}(x,t) \}_{m \ge 0}$ by
\begin{gather*} 
S_{0}(x,t) := \int^{x}_{v} y(z(x'))dx', \qquad
S_{1}(x,t) := -\frac{1}{2} \log \left(\frac{y(z(x))}{2(x-q_{0})}
\right),
\end{gather*}
and for $m \ge 2$
\begin{gather*} 
S_{m}(x,t) := \sum_{\substack{2g-2+n=m-1 \\ g\ge0, n\ge 1}}
\frac{F_{g,n}(z,\dots,z)}{n!} \biggr|_{z=z(x)},
\end{gather*}
where $z(x) = \sqrt{x+2q_{0}}$ is the inverse function of $x(z)$.
After computations we have
\begin{gather*}
S_{0}(x,t)  =  \frac{4}{5}(x-3q_{0})(x+2q_{0})^{3/2}, \qquad
S_{1}(x,t)  =  - \frac{1}{4}\log(x+2q_{0}), \\
S_{2}(x,t)  =  -\frac{x+7q_{0}}{288q_{0}^{2}(x+2q_{0})^{3/2}}, \qquad
S_{3}(x,t)  =  \frac{2x^{2}+14q_{0}x+35q_{0}^{2}}
{6912q_{0}^{4}(x+2q_{0})^{3}}, \\
S_{4}(x,t)  =  -\frac{140x^{4}+1580q_{0}x^{3}
+ 7476 q_{0}^{2}x^{2}+18739q_{0}^{3}x+23499q_{0}^{4}}
{9953280 q_{0}^{7}(x+2q_{0})^{9/2}}.
\end{gather*}

Our main result is the following.

\begin{thm} \label{conj:2}
The formal series $\psi(x,t,\hbar)$ given by
\begin{gather}
\psi(x,t,\hbar)   :=    \exp  ( S(x,t,\hbar)  t),
\label{eq:wave-function} \\
S(x,t,\hbar)   :=   \sum_{m = 0}^{\infty} \hbar^{m-1} S_{m}(x,t)
\label{eq:Riccati-solution}
\end{gather}
satisfies both of the differential equations in
scalar-version of the isomonodromy system~\eqref{eq:Lax-scalar}.
That is, the formal series $S(x,t,\hbar)$ given by~\eqref{eq:Riccati-solution} satisfies
the following differential equations which are equivalent to~\eqref{eq:Lax-scalar}:
\begin{gather}  \label{eq:Riccati-2}
 \hbar^{2}\left( \left(\frac{\p S}{\p x}\right)^{2} +
\frac{\p^{2} S}{\p x^{2}} \right)   =
\frac{\hbar}{x-q} \left( \hbar \frac{\p S}{\p x} - p \right)
+ \big(4x^{3}+2t x + p^{2} - 4q^{3}-2tq\big),   \\
\label{eq:dt-Riccati-2}
 \hbar\frac{\p S}{\p t}   =   \frac{1}{2(x-q)} \left(
\hbar \frac{\p S}{\p x} - p \right).
\end{gather}
\end{thm}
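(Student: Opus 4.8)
The plan is to reduce both equations to a single principal-specialization identity and then to invoke Theorem~\ref{thm:Podd-and-t-derivative}. First observe that, after substituting $f=-\hbar/(x-q)$ and $g=-(4x^{3}+2tx+p^{2}-4q^{3}-2tq)+\hbar p/(x-q)$, equation \eqref{eq:Riccati-2} is exactly the Riccati equation \eqref{eq:Riccati} written for $P:=\p S/\p x$. Since \eqref{eq:Riccati-2} involves only $\p S/\p x$ and $\p^{2}S/\p x^{2}$, it holds as soon as I show that $\p S/\p x$ coincides with the formal Riccati solution $P^{(+)}(x,t,\hbar)$ whose top term is $2(x-q_{0})\sqrt{x+2q_{0}}$. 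Moreover, since $\p(\log\psi_{+})/\p x=P^{(+)}$ and $\psi_{+}$ of \eqref{eq:WKB-IM} already satisfies the full system \eqref{eq:Lax-scalar} by Theorem~\ref{thm:Podd-and-t-derivative}(iii), the identity $\p S/\p x=P^{(+)}$ forces $S=\log\psi_{+}+c(t,\hbar)$ with $c$ independent of $x$, whereupon the second equation \eqref{eq:dt-Riccati-2} reduces to verifying $\p c/\p t=0$. Thus the whole theorem rests on the two claims $\p S/\p x=P^{(+)}$ and $\p c/\p t=0$.

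For the first claim I would prove $\p S_{m}/\p x=P^{(+)}_{m}$ for every $m\ge 0$ by induction on $m$. The base cases $m=0,1$ follow from the explicit formulas for $S_{0},S_{1}$ and $P^{(\pm)}_{0},P^{(\pm)}_{1}$ together with $y(z(x))=2(x-q_{0})\sqrt{x+2q_{0}}$. For the inductive step, the point is to translate the topological recursion \eqref{eq:top-rec} into the Riccati recursion \eqref{eq:recursion-for-Pm}. Differentiating \eqref{eq:open-Fgn} and using the anti-invariance of $W_{g,n}$ under $z_{j}\mapsto\bar z_{j}$ shows that $\p S_{m}/\p x$ is built from the principal specialization of $W_{g,n}$ with one variable kept free and the remaining $n-1$ integrated and set equal, divided by $x'(z)=2z$. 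Inserting \eqref{eq:top-rec} and performing the contour integral around the single ramification point $z=0$, each structural piece of the recursion maps to a term of \eqref{eq:recursion-for-Pm}: the kernel $K$ supplies the normalizing factor $1/(2P^{(\pm)}_{0})$ through $y(z)-y(\bar z)=2y(z)$; the term $W_{g-1,n+1}(z,\bar z,\dots)$ produces the derivative $\p P^{(\pm)}_{m}/\p x$; the stable products $W_{g_{1}}W_{g_{2}}$ assemble the quadratic sum $\sum_{a+b=m+1}P^{(\pm)}_{a}P^{(\pm)}_{b}$; and the $W_{0,2}$-insertions reproduce the $f_{b}$- and $g_{m+1}$-contributions. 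This is the mechanism by which the Eynard--Orantin recursion quantizes its spectral curve in the sense of \cite{DM14,DM14-2}, here specialized to \eqref{eq:par-rep}.

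For the second claim I would use the variation formula of Section~\ref{section:variation-formula} for $\p W_{g,n}/\p t$ to compute $\p S/\p t$ directly and compare it with $\frac{1}{2(x-q)}(\hbar\,\p S/\p x-p)$. The classical shadow of this computation is exactly \eqref{eq:dt-Podd}, and the higher-order analogues supplied by the variation formula account for the $\hbar$-corrections; matching the $x$-independent part then yields $\p c/\p t=0$, with the constant pinned down by the asymptotic normalization \eqref{eq:asymp-P} that encodes $\sigma(t,\hbar)$. Equivalently, once $\p S/\p x=P^{(+)}$ is known, subtracting the $t$-equation satisfied by $\log\psi_{+}$ from the desired $t$-equation for $S$ leaves precisely $\hbar\,\p c/\p t=0$.

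The main obstacle is the inductive step of the first claim: verifying that the contour-integral topological recursion on the genus-$0$ curve \eqref{eq:par-rep} reproduces the algebraic Riccati recursion with the correct coefficients $f$ and $g$. Two points are delicate. First, the coefficients \eqref{eq:hbar-correction-in-quantization}--\eqref{eq:hbar-correction-in-quantization2} involve the full formal solution $q=q(t,\hbar)$, $p=p(t,\hbar)$, whereas the recursion only sees the $\hbar^{0}$ spectral curve \eqref{eq:sp-curve}; one must therefore check that the $t$-dependence through $q_{0}(t)$ regenerates all higher coefficients $q_{2k}$ via \eqref{eq:recursion-q-2k}. Second, the poles at the double turning point $x=q_{0}$ (the singular point of \eqref{eq:sp-curve}) must cancel on the free-energy side, matching the holomorphicity there guaranteed on the Riccati side by Theorem~\ref{thm:Podd-and-t-derivative}(ii). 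This double turning point is what distinguishes the present situation from the Airy case of \cite{BE}.
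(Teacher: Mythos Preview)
Your overall architecture---reduce everything to $\partial S/\partial x=P^{(+)}$ and then invoke Theorem~\ref{thm:Podd-and-t-derivative}---is natural, but the inductive step of your first claim contains a genuine gap.

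You assert that the pieces of the topological recursion map term-by-term onto the Riccati recursion \eqref{eq:recursion-for-Pm}, with ``the $W_{0,2}$-insertions reproduc[ing] the $f_{b}$- and $g_{m+1}$-contributions''. This is not correct. The coefficients $f_{b}$ and $g_{m+1}$ in \eqref{eq:hbar-correction-in-quantization}--\eqref{eq:hbar-correction-in-quantization2} carry the higher corrections $q_{2k}(t)$ of the Painlev\'e transcendent, whereas the topological recursion on \eqref{eq:par-rep} sees only $q_{0}$; the $W_{0,2}$ terms, after principal specialization, contribute only to the derivative piece and to a $1/(x-q_{0})$ factor, not to the full $1/(x-q)$. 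What the principal specialization of \eqref{eq:top-rec} actually produces (Lemma~\ref{prop:partial-Riccati}) is
\[
\Bigl[\hbar^{2}\bigl((\partial_{x}S)^{2}+\partial_{x}^{2}S\bigr)\Bigr]_{\hbar^{m+1}}
-\frac{1}{x-q_{0}}\,\partial_{x}S_{m}
=\frac{2y}{dx/dz}\sum_{2g-2+n=m}\frac{G_{g,n}(z,\dots,z)}{(n-1)!},
\]
and the right-hand side is \emph{not} identically zero. Its $n\ge 2$ part, via the extra residue at the singular point $z=s=\sqrt{3q_{0}}$ of the kernel (the terms you do not account for; see Remark~\ref{rem:difference-from-DM}), combines with the variation formula to give $2\,\partial_{t}S_{m}$ (Lemmas~\ref{lemma:sum-prin-G-and-E}--\ref{lem:tderivative-and-partial-Riccati}), and its $n=1$ part equals the constant $2\,dF_{g}/dt$ (Lemma~\ref{lem:Gg1}). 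Thus what the recursion yields is
\[
\Bigl[\hbar^{2}\bigl((\partial_{x}S)^{2}+\partial_{x}^{2}S\bigr)\Bigr]_{\hbar^{m+1}}
=\Bigl[2\hbar^{2}\partial_{t}S\Bigr]_{\hbar^{m+1}}+2\,\frac{dF_{(m+1)/2}}{dt}\qquad(m\ \text{odd}),
\]
which becomes the Riccati relation \eqref{eq:recursion-for-Pm} only after one already knows the $t$-equation \eqref{eq:dt-Riccati-2} at the lower levels \emph{and} the $\tau$-function identity $dF_{g}/dt=\sigma_{2g}$. In other words, $\partial S/\partial x=P^{(+)}$ cannot be established first and in isolation; the three statements $\partial_{x}S_{m}=P_{m}$, the $t$-equation for $S_{m}$, and $dF_{g}/dt=\sigma_{2g}$ are intertwined and must be proved by a simultaneous induction (Theorem~\ref{thm:induction-scheme}). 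Your proposed order---first the $x$-equation, then the $t$-equation---therefore does not close, and the ``delicate points'' you flag at the end are in fact the crux: the pole of the kernel at $s$ is precisely what feeds the $t$-derivative and the $\tau$-function into the recursion.
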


Thus, the {\em principal specialization}
(i.e., setting $z_{i} = z$ for all $i = 1,\dots,n$)
of the open free energies gives an isomonodromic WKB solution.
Theorem~\ref{conj:2} implies
\begin{gather} \label{eq:dS-is-P}
\frac{\p S}{\p x}(x,t,\hbar) = P^{(+)}(x,t,\hbar)
\end{gather}
holds (under a suitable choice of the branch of $\sqrt{x+2q_{0}}$).
The computational results in Section~\ref{subsection:WKB-for-scalar-Lax}
show that~\eqref{eq:dS-is-P}
holds up to~$\hbar^{4}$. A~full proof of Theorem~\ref{conj:2}
will be given in Section~\ref{section:proof} together
with that of Theorem~\ref{conj:1} below.

\begin{rem}
In the topological recursion~\eqref{eq:top-rec},
we take residues only at the ramif\/ication point $z = 0$.
Thus $W_{g,n}$'s def\/ined here
are dif\/ferent from those in~\cite{DM14-2};
in particular, our quantum curve~\eqref{eq:Lax-scalar}
has {\em infinitely many $\hbar$-corrections} as in~\eqref{eq:hbar-correction-in-quantization} and~\eqref{eq:hbar-correction-in-quantization2}
(but recovers the same spectral curve in the semi-classical limit).
\end{rem}

\begin{rem}
In Theorem~\ref{conj:2}, the choice of the lower end points
of the integral in~\eqref{eq:open-Fgn} is important.
Dif\/ferent choice also give a WKB solution of the f\/irst equation
in~\eqref{eq:Lax-scalar}, but it may not satisfy
the second equation in general.
\end{rem}

\subsection[Closed free energies and the $\tau$-function]{Closed free energies and the $\boldsymbol{\tau}$-function}

The other main result of this paper is giving
another proof of the known fact
about the relationship between the closed free energies
and the $\tau$-function of~$\PI$
(cf.~\cite{DGZJ,EO07}).

\begin{Def}[\protect{\cite[Def\/inition~4.3]{EO07}}]
Def\/ine the {\em closed free energy}
$F_{g} = F_{g}(t)$ for $g \ge 2$ by
\begin{gather*}
F_{g}(t) = \frac{1}{2\pi i (2-2g)}
\oint_{\gamma_{0}} \Phi(z) W_{g,1}(z),
\end{gather*}
where
\begin{gather*}
\Phi(z) = \int^{z}_{z_{0}} y(z)dx(z)
  = \frac{4}{5}z^{5} - 4q_{0} z^{3} +
\text{const}
\end{gather*}
and $z_{0}$ is a generic point.
Free energies $F_{0}$ and $F_{1}$ for $g = 0, 1$
are also def\/ined but in a dif\/ferent manner
(see \cite[Sections~4.2.2 and~4.2.3]{EO07} for the def\/inition).
\end{Def}

Note that $F_g$ def\/ined here is dif\/ferent from $F_{g,n}$
def\/ined in the previous subsection.
$F_g$'s are also called {\it symplectic invariants}
since they are invariant under symplectic transformations
of the spectral curve (see~\cite{EO07}).
Explicit computation shows that
\begin{gather*}
F_{0}(t)   =   -\frac{48 q_{0}^{5}}{5},
\qquad
F_{1}(t)   =   -\frac{1}{24} \log(-3q_{0}), \\
F_{2}(t)   =   \frac{7}{207360 q_{0}^{5}},
\qquad
F_{3}(t)   =   \frac{245}{429981696 q_{0}^{10}}.
\end{gather*}

\begin{thm} [\protect{\cite{DGZJ} and \cite[Section~10.6]{EO07}}] \label{conj:1}
The generating function of the free energy $F_{g}(t)$
gives a $\tau$-function of~$\PI$:
\begin{gather*}
\log\tau(t,\hbar) = \sum_{g=0}^{\infty} \hbar^{2g-2} F_{g}(t).
\end{gather*}
Namely,
\begin{gather} \label{eq:tau-conjecture-g}
\frac{dF_{g}(t)}{dt} = \sigma_{2g}(t).
\end{gather}
\end{thm}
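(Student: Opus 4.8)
The plan is to reduce the claim \eqref{eq:tau-conjecture-g} to the variational structure of the free energies with respect to the deformation parameter~$t$. Recall that the closed free energies $F_g$ depend on $t$ only through $q_0 = q_0(t) = \sqrt{-t/6}$, so I would first express $dF_g/dt$ via the chain rule $dF_g/dt = (dq_0/dt)\,dF_g/dq_0$, using $6q_0^2+t=0$ so that $dq_0/dt = -1/(12q_0)$. From the explicit values $F_0 = -48q_0^5/5$, $F_1 = -\tfrac{1}{24}\log(-3q_0)$, $F_2 = 7/(207360\,q_0^5)$, and so on, one can in principle compute $dF_g/dt$ directly and compare with $\sigma_{2g}(t)$; but this term-by-term check is not a proof for all $g$, so the real task is to identify a uniform mechanism.

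The central tool should be the variation formula for $W_{g,n}$ with respect to $t$, promised in Section~\ref{section:variation-formula}, together with the standard relation (in the Eynard--Orantin formalism) that the $t$-derivative of $F_g$ is computed by integrating $W_{g,1}$ against the one-form dual to the deformation. Concretely, I expect a formula of the shape $\dfrac{dF_g}{dt} = \dfrac{1}{2\pi i}\oint_{\gamma_0} \Lambda(z)\,W_{g,1}(z)$ for an appropriate kernel $\Lambda(z)$ governing the $t$-dependence of the spectral curve \eqref{eq:par-rep}; since $x(z)=z^2-2q_0$ and $y(z)=2z(z^2-3q_0)$ depend on $t$ only through $q_0$, the infinitesimal deformation $\partial_t$ acts in a controlled way and $\Lambda$ can be read off from $\partial_t(y\,dx)$. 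The second ingredient is the asymptotic expansion \eqref{eq:asymp-P}, which identifies $\sigma(t,\hbar)$ as the coefficient of $x^{-3/2}$ in $P^{(\pm)}(x,t,\hbar)$; extracting the $\hbar^{2g}$-component yields $\sigma_{2g}(t)$ as a residue of $P^{(+)}_{2g}(x,t)\,dx$ at $x=\infty$. Since Theorem~\ref{conj:2} gives $\partial S/\partial x = P^{(+)}$, and $S_m$ is the principal specialization $\sum_{2g-2+n=m-1} F_{g,n}(z,\dots,z)/n!$, the coefficient $\sigma_{2g}$ becomes an explicit contour integral of the principally-specialized open free energies, hence of the $W_{g,n}$.

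The strategy, then, is to show that both $dF_g/dt$ and $\sigma_{2g}(t)$ equal the \emph{same} residue/period of the topological-recursion data. I would establish this by matching the contour-integral representation of $\sigma_{2g}$ coming from \eqref{eq:asymp-P} and \eqref{eq:dS-is-P} against the variation-formula representation of $dF_g/dt$, using the dilaton-type relation that links $W_{g,1}$ integrated against $\Phi(z)=\int^z y\,dx$ (appearing in the definition of $F_g$) to $W_{g,1}$ integrated against the deformation kernel. The heart of the argument is a deformation/graph-combinatorial identity asserting that differentiating the defining integral $F_g=\frac{1}{2\pi i(2-2g)}\oint_{\gamma_0}\Phi(z)W_{g,1}(z)$ in $t$ commutes appropriately with the topological recursion; this is where I expect the main obstacle.

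\textbf{The hard part} will be justifying the interchange of the $t$-derivative with the residue defining $F_g$ and, more importantly, proving that the resulting expression is precisely the residue at infinity that produces $\sigma_{2g}$ from \eqref{eq:asymp-P}. This requires a clean variation formula $\partial_t W_{g,1}(z)$ expressed again through topological recursion quantities (the analogue of the Eynard--Orantin special-geometry relations in the present genus-zero, single-ramification-point setting), together with careful control of the boundary terms at the ramification point $z=0$ and at $z=\infty$ where both $\Phi$ and $y\,dx$ have poles. Once the variation formula of Section~\ref{section:variation-formula} is in hand, the remaining computation should collapse to comparing two residues and invoking \eqref{eq:dS-is-P}; the identity \eqref{eq:tau-conjecture-g} then follows for every $g\ge 0$ uniformly, completing the proof.
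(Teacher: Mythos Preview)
Your proposal has a circularity problem. You invoke Theorem~\ref{conj:2} (via \eqref{eq:dS-is-P}) as an established input, but in this paper Theorem~\ref{conj:2} is \emph{not} proved independently of Theorem~\ref{conj:1}: the two are established simultaneously by the coupled induction of Theorem~\ref{thm:induction-scheme}. The induction hypothesis~\eqref{eq:assume-A-1} explicitly includes $dF_{g}/dt=\sigma_{2g}$ at level $g=k/2$, and this is precisely what is used in Section~\ref{subsec:proof-A-2} (together with Lemma~\ref{lem:Gg1}) to close the Riccati-type identity~\eqref{eq:claim-A-1} at odd order and thereby push $\partial S_{m}/\partial x = P_{m}$ to the next level. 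So you cannot simply take \eqref{eq:dS-is-P} for granted when proving~\eqref{eq:tau-conjecture-g}; as written, your argument assumes what the paper is in the process of proving.

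Separately, you misidentify the hard part. The variation formula you anticipate is exactly~\eqref{eq:dt-closed-Fg}: it gives $dF_{g}/dt = -\Res_{z=\infty} z\,W_{g,1}(z)$ directly, so there is no need to differentiate under the defining residue of $F_{g}$ or to control $\partial_{t}W_{g,1}$. If one \emph{did} have \eqref{eq:dS-is-P} in hand, the match with $\sigma_{2g}$ would be short: by~\eqref{eq:decay-of-Fgn-prin} the terms $F_{g',n}(z,\dots,z)$ with $n\ge 3$ in $S_{2g}$ are $O(z^{-3})$, so the $z^{-2}$-coefficient of $dS_{2g}/dz = P_{2g}\,(dx/dz)$ comes entirely from $W_{g,1}$, and reading it off via~\eqref{eq:asymp-P} gives $\sigma_{2g}$. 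The paper's actual route is different: inside the coupled induction, Lemmas~\ref{prop:partial-Riccati}--\ref{lem:Gg1} yield the identity $\frac{\partial S_{0}}{\partial x}\bigl(\frac{\partial S_{k+2}}{\partial x}-P_{k+2}\bigr)=\frac{dF_{(k+2)/2}}{dt}-\sigma_{k+2}$ (Lemma~\ref{prop:partial-taufunction-conjecture}), and since $\partial S_{0}/\partial x = y(z(x))$ vanishes at the double turning point $x=q_{0}$ while both sides are holomorphic there (Theorem~\ref{thm:Podd-and-t-derivative}(ii)), evaluation at $x=q_{0}$ gives~\eqref{eq:tau-conjecture-g}. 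Your residue-at-infinity idea could replace this evaluate-at-$q_{0}$ step, but only if embedded in the same coupled induction, not as a standalone argument.
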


The proof will be given in Section~\ref{section:proof}.
It is worth mentioning that the closed free energies specify one
particular $\tau$-function although there is an ambiguity
in Def\/inition~\ref{def:tau-function}.

\begin{prop}
For $g \ge 2$, we have
\begin{gather} \label{eq:normalization-of-Fg}
F_{g}(t) = \int^{t}_{\infty} \sigma_{2g}(t')dt'.
\end{gather}
\end{prop}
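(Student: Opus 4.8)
The plan is to combine the derivative identity supplied by Theorem~\ref{conj:1} with a scaling homogeneity of the closed free energies that fixes the constant of integration. By Theorem~\ref{conj:1} we have $dF_{g}(t)/dt = \sigma_{2g}(t)$, so for any base point $t_{0}$,
\begin{gather*}
F_{g}(t) = F_{g}(t_{0}) + \int_{t_{0}}^{t} \sigma_{2g}(t')\,dt'.
\end{gather*}
Hence \eqref{eq:normalization-of-Fg} is equivalent to the two assertions that $F_{g}(t) \to 0$ as $t \to \infty$ and that $\sigma_{2g}$ is integrable near $\infty$; both will follow once the exact $q_{0}$-dependence of $F_{g}$ is known.

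The key step is to show $F_{g}(t) = c_{g}\,q_{0}^{-5(g-1)}$ for a constant $c_{g}$, which I would establish through the $\mathbb{C}^{\times}$-weight of the topological recursion. Under the substitution $z \mapsto \lambda^{1/2} z$ together with $q_{0} \mapsto \lambda q_{0}$, the parametrization~\eqref{eq:par-rep} gives $x \mapsto \lambda x$ and $y \mapsto \lambda^{3/2} y$, so $W_{0,1} = y\,dx \mapsto \lambda^{5/2} W_{0,1}$, while the Bergman kernel $W_{0,2}$ is invariant and the recursion kernel $K(z,z_{1})$ scales with weight $\lambda^{-5/2}$. Feeding these weights into~\eqref{eq:top-rec} and inducting on $2g-2+n$ yields the homogeneity $W_{g,n} \mapsto \lambda^{(5/2)(2-2g-n)} W_{g,n}$, which is consistent with the explicit $W_{0,3}$, $W_{1,1}$, $W_{1,2}$, $W_{2,1}$ listed above. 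Since $\Phi(z) = \tfrac{4}{5}z^{5} - 4q_{0}z^{3}$ scales with weight $\lambda^{5/2}$ and $W_{g,1}$ with weight $\lambda^{(5/2)(1-2g)}$, the residue formula defining $F_{g}$ gives $F_{g}(\lambda q_{0}) = \lambda^{(5/2)(2-2g)}F_{g}(q_{0}) = \lambda^{-5(g-1)} F_{g}(q_{0})$, matching $F_{2} \propto q_{0}^{-5}$ and $F_{3} \propto q_{0}^{-10}$.

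With the homogeneity in hand the argument closes quickly. For $g \ge 2$ the exponent $-5(g-1) \le -5$ is negative, and $q_{0} = \sqrt{-t/6} \to \infty$ as $t \to \infty$, so $F_{g}(t) \to 0$; correspondingly $\sigma_{2g} = dF_{g}/dt \propto t^{-5(g-1)/2 - 1}$ is integrable at infinity, and letting $t_{0} \to \infty$ in the displayed relation yields \eqref{eq:normalization-of-Fg}. The main obstacle is the careful bookkeeping of the scaling weights through~\eqref{eq:top-rec}, including the contour integral and the residue about the (scale-fixed) ramification point $z=0$ and the $\Phi$-factor in the definition of $F_{g}$, together with controlling the limit $t \to \infty$ along a ray avoiding the Stokes curves, so that convergence of the integral and vanishing of the boundary term hold simultaneously.
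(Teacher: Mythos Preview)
Your argument is correct, and its logical skeleton is the same as the paper's: invoke $dF_{g}/dt=\sigma_{2g}$ from Theorem~\ref{conj:1} and then pin down the integration constant by showing that $F_{g}\to 0$ as $q_{0}\to\infty$. Where you differ is in \emph{how} you establish the decay. The paper argues crudely: as $q_{0}\to\infty$ the poles $z=\pm s=\pm\sqrt{3q_{0}}$ of the recursion kernel move away from the fixed contour $\gamma_{0}$, so each pass through the recursion~\eqref{eq:top-rec} contributes a factor $O(q_{0}^{-1})$, giving $W_{g,n}=O\big(q_{0}^{-(2g-2+n)}\big)$ and hence $F_{g}=O\big(q_{0}^{-(2g-2)}\big)$. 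You instead run a $\mathbb{C}^{\times}$-weight (dimensional analysis) argument through the recursion to obtain the \emph{exact} homogeneity $F_{g}=c_{g}\,q_{0}^{-5(g-1)}$, which is sharper than the paper's bound and matches the listed $F_{2}$, $F_{3}$. Your approach is more informative (it explains the explicit powers of $q_{0}$ seen in all the displayed $W_{g,n}$ and $F_{g}$) at the cost of having to track weights carefully through $K$, the contour, and $\Phi$; the paper's approach needs less bookkeeping but yields only an upper bound.

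One small comment: the worry in your final sentence about ``controlling the limit $t\to\infty$ along a ray avoiding the Stokes curves'' is misplaced. Everything here is purely algebraic---$F_{g}$ and $\sigma_{2g}$ are explicit monomials in $q_{0}$ (equivalently, powers of $t$) with no $\hbar$ present---so the limit $t\to\infty$ and the convergence of the integral are elementary and involve no Borel summation or Stokes phenomena. The Stokes curves are relevant only to the summability of the full $\hbar$-series, not to the individual coefficients $F_{g}$ treated in this proposition.
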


\begin{proof}
Let us describe the behavior of the  $W_{g,n}$'s when
$q_{0} \rightarrow \infty$ (i.e., $t \rightarrow \infty$).
When $q_{0}$ tends to $\infty$,
no singular point of the integrand in the right hand-side of~\eqref{eq:top-rec}
on the $z$-plane hits the integration cycle $\gamma_{0}$.
Thus, we can show that
\begin{gather*}
W_{g,n}(z_{1},\dots,z_{n}) = O\big(q_{0}^{-(2g-2+n)}\big)
\end{gather*}
for $2g-2+n \ge 0$. This implies that
\begin{gather*}
F_{g}(t) = O\big(q_{0}^{-(2g-2)}\big)
\end{gather*}
holds since $\Phi(z) \sim q_{0}$ as $q_{0} \rightarrow \infty$
(but we can verify that~$F_{g}$ for $g \ge 2$
has a stronger decay in the above explicit computations).
This completes the proof of~\eqref{eq:normalization-of-Fg}.
\end{proof}

\subsection{Asymptotics of Eynard--Orantin dif\/fernetials}
\label{section:asymptotics-Wgn}

The rest of this section will be devoted to
show some important properties of $W_{g,n}$ and $F_{g,n}$.
Firstly, we will describe the asymptotic behavior
of them near $z_{i} = \infty$.

\begin{lem} \label{lem:decay-Wgn-Fgn}\quad
\begin{itemize}\itemsep=0pt
\item[$(i)$]
For $2g-2+n \ge 0$, we have
\begin{gather} \label{eq:asympt-Wgn}
W_{g,n}(z_{1},\dots,z_{n}) =
\left(\frac{c_{g,n}}{z_{1}^{2}\cdots z_{n}^{2}}
+ O\big(z_{1}^{-4}\cdots z_{n}^{-4}\big) \right)dz_{1}\cdots dz_{n}
\end{gather}
as $z_{i} \rightarrow \infty$ for all $i=1,\dots,n$.
Here $c_{g,n} \in \bbC$ is a constant.
\item[$(ii)$] %
For $2g-2+n \ge 0$, we have
\begin{gather} \label{eq:asympt-Fgn}
F_{g,n}(z_{1},\dots,z_{n}) =
\frac{c'_{g,n}}{z_{1}\cdots z_{n}} +
O\big(z_{1}^{-3}\cdots z_{n}^{-3}\big), \qquad
c'_{g,n} \in \bbC,
\end{gather}
as $z_{i} \rightarrow \infty$ for all $i=1,\dots,n$.
\end{itemize}
\end{lem}

\begin{proof}
The f\/irst property \eqref{eq:asympt-Wgn} follows from the
analyticity of $W_{g,n}$ at $z_i = \infty$.
The second property~\eqref{eq:asympt-Fgn} follows
from~\eqref{eq:asympt-Wgn} immediately
because $F_{g,n}(z_{1},\dots,z_{n})$ doesn't have
a constant term due to the def\/inition \eqref{eq:open-Fgn}.
\end{proof}

As a corollary, the principal specialization
of open free energies satisf\/ies
\begin{gather} \label{eq:decay-of-Fgn-prin}
F_{g,n}(z,\dots,z) = O\big(z^{-n}\big)
\end{gather}
when $z \rightarrow \infty$.

\subsection{Variation of spectral curve}
\label{section:variation-formula}

There is a formula (for ``variation of spectral curves'')
that allows us to compute derivatives of~$W_{g,n}$ etc.\
with respect to the parameter~$t$.

\begin{thm} [cf.~\protect{\cite[Theorem~5.1]{EO07}}]\label{thm:variation}\quad
\begin{itemize}\itemsep=0pt
\item[$(i)$]
For $2g-2+n \ge 0$, we have
\begin{gather}
\frac{\p}{\p t} W_{g,n}(z(x_{1}), \dots, z(x_{n}))\nonumber\\
\qquad{}
= -2\Res_{x_{n+1}=\infty} z(x_{n+1})
W_{g,n+1}\bigl( z(x_{1}),\dots,z(x_{n}),z(x_{n+1}) \bigr).\label{eq:variation-Wgn}
\end{gather}
\item[$(ii)$]
For $g \ge 1$, we have
\begin{gather} \label{eq:dt-closed-Fg}
\frac{dF_{g}}{dt}(t)
 = - 2\Res_{x=\infty} z(x) W_{g,1}(z(x))
= -\Res_{z=\infty} z W_{g,1}(z).
\end{gather}
\item[$(iii)$]
For $2g-2+ n \ge 1$, we have
\begin{gather*} 
\frac{\p}{\p t} F_{g,n}(z(x_{1}),\dots,z(x_{n})) \\
\qquad{} =
- 2\Res_{x_{n+1}=\infty} z(x_{n+1})
d_{x_{n+1}} F_{g,n+1}(z(x_{1}),\dots,z(x_{n}),z(x_{n+1})),
\end{gather*}
or equivalently,
\begin{gather}
\frac{\p}{\p t} F_{g,n}(z(x_{1}),\dots,z(x_{n})) \nonumber\\
\qquad{}= \lim_{z_{n+1} \rightarrow \infty} \left.
\left( z_{n+1}^{2}
\frac{\p}{\p z_{n+1}}F_{g,n+1}(z_{1},\dots,z_{n},z_{n+1})\right)
\right|_{(z_{1},\dots,z_{n})=(z(x_{1}), \dots, z(x_{n}))}. \label{eq:variation-openFgn-2}
\end{gather}
\end{itemize}
\end{thm}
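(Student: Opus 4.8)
The plan is to establish part~(i) first and then read off~(ii) and~(iii), since~(i) is the master variation formula for the $W_{g,n}$ themselves and it is an instance of the Eynard--Orantin variation principle \cite[Theorem~5.1]{EO07}. The whole statement rests on two elementary facts that I would record at the outset: the parameter $t$ enters the spectral curve \eqref{eq:par-rep} only through $q_0$, with $\dot q_0=-1/(12q_0)$, and the derivative in~(i) is taken \emph{at fixed $x_i$} (the arguments are written $z(x_i)$), for which $\partial z/\partial t|_{x}=\dot q_0/z$. This fixed-$x$ convention is precisely what makes the formula nontrivial already at $(g,n)=(0,2)$: although $W_{0,2}=B$ carries no $q_0$ in the $z$-coordinate, after the substitution $z_i=z(x_i)$ it acquires a nonzero $t$-derivative, and a direct computation shows that both sides of \eqref{eq:variation-Wgn} equal $dz_1dz_2/(12q_0z_1^2z_2^2)$.

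I would prove~(i) by induction on $2g-2+n\ge 0$, with $(0,2)$ as the verified base case and $(0,3)$, $(1,1)$ as the first recursive instances. For the inductive step I would differentiate the topological recursion \eqref{eq:top-rec} for $W_{g,n}$ with respect to $t$, holding the external points $x_1,\dots,x_n$ fixed. By Leibniz the derivative falls either on the kernel $K(z,z_1)$ (through $y(z)-y(\bar z)$, $dx(z)$, and the endpoints of $\omega^{\bar z-z}$) or on one of the stable factors $W_{g',n'}$ inside the bracket. The inductive hypothesis turns each $\partial_t W_{g',n'}$ into a residue at infinity that inserts one new point carrying the weight $z(x')$; the aim is to show that, summed together with the kernel term, these insertions reassemble --- through the very recursion that defines $W_{g,n+1}$ --- into $-2\Res_{x_{n+1}=\infty}z(x_{n+1})W_{g,n+1}(z_1,\dots,z_n,z_{n+1})$.

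The hard part will be the bookkeeping of this reorganization, and in particular two points. First, the variation of the kernel has to be computed in the fixed-$x$ convention; using $\partial z/\partial t|_x=\dot q_0/z$ one must check that $\partial_t K$ supplies exactly those terms of the recursion for $W_{g,n+1}$ that are not already produced by the inserted factors, so that the Leibniz expansion closes. Second, the freshly created residue at $x_{n+1}=\infty$ must be pulled outside the defining contour $\oint_{\gamma_0}$ about the ramification point $z=0$; this is legitimate because for $2g-2+n\ge 1$ the form $W_{g,n+1}$ is holomorphic in $z_{n+1}$ away from $z_{n+1}=0$, so at infinity only the explicit weight $z(x_{n+1})$ feeds the residue (cf.\ the decay \eqref{eq:asympt-Wgn}), and the cycles at $0$ and at $\infty$ may be exchanged without crossing a singularity. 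Coordinating these contour moves with the $z\leftrightarrow\bar z$ anti-invariance is the only genuinely delicate step.

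Finally I would deduce~(ii) and~(iii). For~(iii), since $d_{z_1}\cdots d_{z_n}F_{g,n}=W_{g,n}$, I integrate~(i) in $z_1,\dots,z_n$, fixing the integration constants by the decay \eqref{eq:asympt-Fgn}; rewriting the residue at $x_{n+1}=\infty$ as the leading Taylor coefficient then yields the equivalent limit form \eqref{eq:variation-openFgn-2}. Part~(ii) is the free-energy analogue of the same principle: differentiating $F_g=\frac{1}{2\pi i(2-2g)}\oint_{\gamma_0}\Phi\,W_{g,1}$, inserting~(i) for $\partial_t W_{g,1}$ together with $\partial_t\Phi=\int^z\partial_t(y\,dx)$, and integrating by parts inside $\oint_{\gamma_0}$ to absorb the factor $2-2g$, gives $dF_g/dt=-2\Res_{x=\infty}z(x)W_{g,1}$. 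The second equality in \eqref{eq:dt-closed-Fg} then records only the ramification factor $dx=2z\,dz$ at $z=\infty$, which halves the $x$-residue relative to the $z$-residue for the even integrand $z\,W_{g,1}$.
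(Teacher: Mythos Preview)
Your proposal is correct in outline, but it takes a substantially heavier route than the paper's. The paper does not reprove anything by induction: it simply identifies the function $\Lambda(z)=z$ and checks the single residue condition
\[
\Res_{z=\infty}\bigl(\Lambda(z)\,W_{0,2}(z,z_1)\bigr)=-dz_1=-\bigl(\partial_t y(z_1)\,dx(z_1)-\partial_t x(z_1)\,dy(z_1)\bigr),
\]
which is precisely the hypothesis of \cite[Theorem~5.1]{EO07}. Parts~(i) and~(ii) then follow at once from that theorem, and~(iii) by integrating~(i). The whole proof is three lines. What you propose --- differentiating the recursion \eqref{eq:top-rec} in $t$, tracking the variation of the kernel, and reorganising the resulting terms into the recursion for $W_{g,n+1}$ --- is in effect a self-contained re-derivation of \cite[Theorem~5.1]{EO07} in this particular instance. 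That is a legitimate strategy (indeed it is essentially how the cited theorem is proved), and your identification of the two delicate points (the fixed-$x$ variation of the kernel, and the commutation of the residues at $0$ and at $\infty$) is accurate. But it duplicates work already packaged in the reference: once you observe that $\Lambda=z$ meets the hypothesis, the induction, the kernel variation, and the contour exchange have all been done for you. Your plan for~(iii) matches the paper's; your plan for~(ii) via differentiating the $\Phi$-integral and integrating by parts would also work but is again superseded by the direct citation.
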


\begin{proof}
Set $\Lambda(z) := z$.
Then, we can check $\Lambda(z)$ satisf\/ies
the required condition
\begin{gather*} 
\Res_{z=\infty} \left( \Lambda(z) W_{0,2}(z,z_{1}) \right) = - dz_{1}
= - \left(\frac{\p y}{\p t}(z_{1}) dx(z_{1}) -
\frac{\p x}{\p t}(z_{1}) dy(z_{1}) \right)
\end{gather*} 
to apply \cite[Theorem~5.1]{EO07}. Thus the claim~(i) and~(ii)
are proved. Integrating both hand-sides of
\eqref{eq:variation-Wgn}, we have~(iii).
\end{proof}

\subsection{Dif\/ferential recursion for open free energies}

Here we give a key theorem in the proof of our main results.
We have the following {\em differential recursion} which is
a modif\/ication of the one obtained in \cite{DM14, DM14-2}.

\begin{thm} \label{thm:diff-rec}
The open free energies for $2g-2+n \ge 2$ satisfy
the following equations
\begin{gather}
\frac{\p F_{g,n}}{\p z_{1}}(z_{1},\dots,z_{n}) =
\sum_{j=2}^{n} \frac{-2z_{j}}{z_{1}^{2}-z_{j}^{2}}
\biggl( \frac{1}{2y(z_{1})\frac{dx}{dz}(z_{1})}
\frac{\p F_{g,n-1}}{\p z_{1}}(z_{[\hat{j}]})
 - \frac{1}{2y(z_{j})\frac{dx}{dz}(z_{j})}
\frac{\p F_{g,n-1}}{\p z_{j}}(z_{[\hat{1}]}) \biggr)
\nonumber\\
\hphantom{\frac{\p F_{g,n}}{\p z_{1}}(z_{1},\dots,z_{n}) =}{}
- \frac{1}{2y(z_{1})\frac{dx}{dz}(z_{1})} \frac{\p^{2}}{\p u_{1} \p u_{2}}
\biggl( F_{g-1,n+1}(u_{1},u_{2},z_{[\hat{1}]}) \nonumber\\
\hphantom{\frac{\p F_{g,n}}{\p z_{1}}(z_{1},\dots,z_{n}) =}{}
+
\sum_{\substack{g_{1}+g_{2}=g \\ I \sqcup J = [\hat{1}]}}^{\rm stable}
F_{g_{1}, |I|+1}(u_{1},z_{I}) F_{g_{2}, |J|+1}(u_{2}, z_{J})
\biggr)\Biggr|_{u_{1}=u_{2}=z_{1}}
\nonumber\\
\hphantom{\frac{\p F_{g,n}}{\p z_{1}}(z_{1},\dots,z_{n}) =}{}
+ \frac{s}{\frac{dy}{dz}(s) \frac{dx}{dz}(s)(z_{1}^{2}-s^{2})}
\Biggl[ \sum_{j=2}^{n} \frac{-2 z_{j}}{z_{j}^{2}-s^{2}}
\frac{\p F_{g,n-1}}{\p z_{1}}(s,z_{[\hat{1}, \hat{j}]})
\nonumber\\
\hphantom{\frac{\p F_{g,n}}{\p z_{1}}(z_{1},\dots,z_{n}) =}{}
+ \frac{\p^{2}}{\p u_{1} \p u_{2}}
\biggl( F_{g-1,n+1}(u_{1},u_{2},z_{[\hat{1}]})\nonumber\\
\hphantom{\frac{\p F_{g,n}}{\p z_{1}}(z_{1},\dots,z_{n}) =}{}
 +
\sum_{\substack{g_{1}+g_{2}=g \\
I \sqcup J = [\hat{1}]}}^{\rm stable}
F_{g_{1}, |I|+1}(u_{1},z_{I}) F_{g_{2}, |J|+1}(u_{2}, z_{J})
\biggr)\Biggr|_{u_{1}=u_{2}=s} \Biggr].\label{eq:diff-rec}
\end{gather}
Here
$s = (3q_{0})^{1/2}$
is a zero of~$y(z)$.
\end{thm}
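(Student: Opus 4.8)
The plan is to derive the differential recursion \eqref{eq:diff-rec} directly from the topological recursion \eqref{eq:top-rec} by converting the residue/contour formulation into one involving only the open free energies. Recall that $W_{g,n} = d_{z_1}\cdots d_{z_n} F_{g,n}$, so I would first integrate the topological recursion with respect to $z_2,\dots,z_n$ (using the lower endpoints $\bar z_j$ as in \eqref{eq:open-Fgn}) to rewrite the right-hand side of \eqref{eq:top-rec} in terms of $\partial F_{g',n'}/\partial z$ rather than the full differentials $W_{g',n'}$. The key technical device is to evaluate the contour integral $\frac{1}{2\pi i}\oint_{\gamma_0}$ around the ramification point $z=0$ by deforming the cycle $\gamma_0$ and collecting residues at all the \emph{other} poles of the integrand. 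Because $W_{g,n}$ for $2g-2+n\ge 1$ is holomorphic except at $z=0$ and has the decay \eqref{eq:asympt-Wgn} at $z=\infty$, the only poles that appear outside $\gamma_0$ are at $z=\pm z_j$ (coming from the $W_{0,2}$ factors, equivalently the kernel $\omega^{\bar z - z}(z_1)$) and at the zeros of $y(z)-y(\bar z)$ in the recursion kernel $K(z,z_1)$.

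Next I would compute the recursion kernel explicitly for our parametrization \eqref{eq:par-rep}. Since $\bar z = -z$, one has $y(z)-y(\bar z) = 2\bigl(y(z)\bigr)$-type expression and $dx(z) = 2z\,dz$, so $K(z,z_1)$ becomes an explicit rational function whose denominator involves $y(z)\frac{dx}{dz}(z)$; its poles away from $z=0$ sit precisely at the zeros of $\frac{dy}{dz}\frac{dx}{dz}$, i.e.\ at $z=s$ with $s=(3q_0)^{1/2}$ the nontrivial zero of $y(z)=2z(z^2-3q_0)$. This is the origin of the last bracketed term in \eqref{eq:diff-rec}, carrying the prefactor $\frac{s}{\frac{dy}{dz}(s)\frac{dx}{dz}(s)(z_1^2-s^2)}$. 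Meanwhile, taking residues at $z=\pm z_j$ produces the factors $\frac{-2z_j}{z_1^2-z_j^2}$ and the terms $\frac{1}{2y(z_1)\frac{dx}{dz}(z_1)}$ that multiply $\partial F_{g,n-1}/\partial z_1$ in the first and second lines; the involution-antisymmetry of $W_{g,n}$ under $z_i\mapsto\bar z_i$ is what lets me package the two $W_{0,2}$ contributions in \eqref{eq:top-rec} into the single difference appearing in the first line. The $W_{g-1,n+1}$ and the stable-range splitting sum in \eqref{eq:top-rec} translate, after integration, into the second-order operator $\frac{\partial^2}{\partial u_1\partial u_2}\bigl(F_{g-1,n+1}+\sum^{\mathrm{stable}} F_{g_1}F_{g_2}\bigr)\big|_{u_1=u_2=\,\cdot}$ evaluated first at $z_1$ (residue at $z=\pm z_1$) and then at $s$ (residue from the kernel).

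The step I expect to be the main obstacle is bookkeeping the residue contributions at $z=s$ consistently: one must verify that no additional poles of the full integrand contribute, that the contour deformation picks up each pole with the correct orientation and multiplicity, and that the second-derivative reconstruction $\frac{\partial^2}{\partial u_1\partial u_2}F$ correctly accounts for the diagonal evaluation when the two factors $W_{0,2}(z,z_j)$ and $W_{0,2}(\bar z, z_3)$ in \eqref{eq:top-rec} collapse onto a single variable under integration. Care is also needed because $s=(3q_0)^{1/2}$ is a zero of $y$ but \emph{not} a ramification point, so it does not contribute to the closed free energies or to the semi-classical spectral curve; its appearance here is a genuine feature of the open (principally specialized) theory, distinguishing \eqref{eq:diff-rec} from the recursions in \cite{DM14, DM14-2}. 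Once the residue computation is organized, matching each resulting term against the right-hand side of \eqref{eq:diff-rec} is a direct—if lengthy—verification, and the constraint $2g-2+n\ge 2$ guarantees that every $F_{g',n'}$ appearing on the right lies in the stable range where the integral representation \eqref{eq:open-Fgn} and the decay \eqref{eq:decay-of-Fgn-prin} are valid.
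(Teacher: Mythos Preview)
Your proposal is correct and follows essentially the same route as the paper: integrate the topological recursion \eqref{eq:top-rec} in $z_2,\dots,z_n$, then evaluate the contour integral around $\gamma_0$ via the residue theorem, collecting the simple poles at $z=\pm z_1$ and $z=\pm s$ coming from the kernel $K(z,z_1)$ and at $z=\pm z_j$ ($j\ge 2$) coming from the $W_{0,2}$ factors. One small slip to fix: the extra poles of $K$ at $z=\pm s$ arise from the zeros of $y(z)$ (since $y(z)-y(\bar z)=2y(z)$ for our odd $y$), not from the zeros of $\frac{dy}{dz}\frac{dx}{dz}$; the factor $\frac{dy}{dz}(s)$ in the prefactor appears only because the residue of $1/y(z)$ at the simple zero $z=s$ equals $1/y'(s)$.
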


\begin{proof}
This can be proved by a similar technique used in
\cite[Theorem~4.7]{DM14}, as follows. Integrating
the topological recursion relation~\eqref{eq:top-rec}
with respect to $z_{2}, \dots, z_{n}$, we have
\begin{gather}
\frac{\p}{\p z_{1}} F_{g,n}(z_{1},\dots,z_{n})  =
\frac{1}{2^{n-1}}\int_{\bar{z}_{2}}^{z_{2}} \cdots
\int_{\bar{z}_{n}}^{z_{n}} W_{g,n}(z_{1},\dots,z_{n})\nonumber \\
\hphantom{\frac{\p}{\p z_{1}} F_{g,n}(z_{1},\dots,z_{n})}{}
=
\frac{1}{2\pi i} \frac{1}{2^{n-1}} \oint_{\gamma_{0}}
K(z,z_{1}) R_{g,n}(z,z_{2},\dots,z_{n}), \label{eq:pre-diff-recursion}
\end{gather}
where
\begin{gather*}
R_{g,n}(z,z_{2},\dots,z_{n}) =
\sum_{j=2}^{n}
\Biggl[ \biggl(\int^{z_{j}}_{\bar{z}_{j}}W_{0,2}(z,z_{j}) \biggr)
\biggl(\int^{z_{[\hat{1}, \hat{j}]}}_{\bar{z}_{[\hat{1}, \hat{j}]}}
W_{g,n-1}(\bar{z}, z_{[\hat{1},\hat{j}]}) \biggr)\\
\hphantom{R_{g,n}(z,z_{2},\dots,z_{n}) =}{}
- \biggl(\int^{z_{j}}_{\bar{z}_{j}}W_{0,2}(\bar{z},z_{j}) \biggr)
\biggl(\int^{z_{[\hat{1}, \hat{j}]}}_{\bar{z}_{[\hat{1}, \hat{j}]}}
W_{g,n-1}(z, z_{[\hat{1},\hat{j}]})
\biggr) \Biggr]
\\
\hphantom{R_{g,n}(z,z_{2},\dots,z_{n}) =}{}
+ \int^{z_{[\hat{1}]}}_{\bar{z}_{[\hat{1}]}}
W_{g-1,n+1}(z,\bar{z},z_{[\hat{1}]})\\
\hphantom{R_{g,n}(z,z_{2},\dots,z_{n}) =}{}
 + \sum_{\substack{g_{1}+g_{2}=g \\
 I \sqcup J = [\hat{1}]}}^{\text{stable}}
\biggl( \int^{z_{I}}_{\bar{z}_{I}}
W_{g_{1}, |I|+1}(z,z_{I}) \biggr)
\biggl( \int^{z_{J}}_{\bar{z}_{J}}
W_{g_{2}, |J|+1}(\bar{z}, z_{J})
\biggr).
\end{gather*}
Here, for a set $L =\{\ell_{1}, \dots, \ell_{k} \}
\subset \{1,\dots,n \}$ of indices, we have used the notation
\begin{gather*}
\int^{z_{L}}_{\bar{z}_{L}} W_{g,n}(z_{1},\dots,z_{n}) :=
\int^{z_{\ell_{1}}}_{\bar{z}_{\ell_{1}}} \cdots
\int^{z_{\ell_{k}}}_{\bar{z}_{\ell_{k}}} W_{g,n}(z_{1},\dots,z_{n}).
\end{gather*}
On the $z$-plane, the integrand
$K(z,z_{1})R_{g,n}(z,z_{1},\dots,z_{n})$
in the right hand-side of~\eqref{eq:pre-diff-recursion}
has poles at
\begin{itemize}\itemsep=0pt
\item
at $z = z_{1}, \bar{z}_{1}$ which are poles of $K(z,z_{1})$,
\item
at $z = z_{2}, \dots, z_{n}, \bar{z}_{2}, \dots, \bar{z}_{n}$
which are poles of $W_{0,2}(z,z_{j})$ and $W_{0,2}(\bar{z},z_{j})$,
\item
at $z = s, \bar{s}$ which are poles of $K(z,z_{1})$,
\end{itemize}
and all of them are simple poles.
Then, the equalities
\begin{gather*}
\int^{z_{j}}_{\bar{z}_{j}} W_{0,2}(z,z_{j})   =
\left(\frac{1}{z-z_{j}} - \frac{1}{z-\bar{z}_{j}} \right)dz, \\
\frac{1}{2^{n-2}}
\int^{z_{[\hat{1}, \hat{j}]}}_{\bar{z}_{[\hat{1}, \hat{j}]}}
W_{g,n}(z,z_{[\hat{1}, \hat{j}]})   =
\frac{\p F_{g,n-1}}{\p z_{1}}(z,z_{[\hat{1}, \hat{j}]})
\end{gather*}
and the residue theorem show~\eqref{eq:diff-rec}.
\end{proof}

\begin{rem} \label{rem:difference-from-DM}
Note that the f\/irst two blocks in the right hand-side
of~\eqref{eq:diff-rec} coincide with that obtained
in~\cite{DM14, DM14-2}.
Unlike the case of \cite{DM14, DM14-2},
we need more terms arising from $z=s$ corresponding
to the singular point $(x,y)=(q_{0},0)$
of the spectral curve~\eqref{eq:sp-curve}
since it becomes a~(simple) pole of
the recursion kernel $K(z,z_{1})$.
It also worth mentioning that the
right hand-side of~\eqref{eq:diff-rec} doesn't have
singularity at $z_{j} = s$ for $j=1,\dots,n$.
\end{rem}

Using this dif\/ferential recursion, we can give
an alternative expression of~\eqref{eq:variation-openFgn-2}
as follows.

\begin{thm} \label{thm:variation-of-open-free-energy}
For $2g-2+n \ge 1$, the following holds:
\begin{gather} \label{eq:t-derivative-recursion}
\frac{\p }{\p t} F_{g,n}(z(x_{1}),\dots,z(x_{n}))
= E_{g,n}(z(x_{1}),\dots,z(x_{n})),
\end{gather}
where
\begin{gather}
E_{g,n}(z_{1},\dots,z_{n}) \nonumber\\
\qquad{}
:=\sum_{j=1}^{n}
\frac{2z_{j}}{2y(z_{j})\frac{dx}{dz}(z_{j})}
\frac{\p F_{g,n}}{\p z_{j}}(z_{1},\dots,z_{n}) +
\frac{s}{\frac{dy}{dz}(s) \frac{dx}{dz}(s)}
\sum_{j=1}^{n} \frac{-2 z_{j}}{z_{j}^{2}-s^{2}}
\frac{\p F_{g,n}}{\p u_{1}}(u_{1},z_{[\hat{j}]})
\Biggr|_{u_{1}=s}
\nonumber\\
\qquad\quad{}+ \frac{s}{\frac{dy}{dz}(s) \frac{dx}{dz}(s)}
\frac{\p^{2}}{\p u_{1} \p u_{2}}
\biggl( F_{g-1,n+2}(u_{1},u_{2},z_{1},\dots,z_{n})
\nonumber\\
\qquad\quad{}+ \sum_{\substack{g_{1}+g_{2}=g \\
I \sqcup J = \{1,\dots,n\}}}^{\rm stable}
F_{g_{1}, |I|+1}(u_{1},z_{I}) F_{g_{2}, |J|+1}(u_{2}, z_{J})
\biggr)\Biggr|_{u_{1}=u_{2}=s}.\label{eq:Egn}
\end{gather}
\end{thm}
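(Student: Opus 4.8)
The plan is to feed the differential recursion \eqref{eq:diff-rec} into the variation formula \eqref{eq:variation-openFgn-2}. By Theorem~\ref{thm:variation}(iii), after the substitution $z_i = z(x_i)$ the left-hand side of \eqref{eq:t-derivative-recursion} equals
\begin{gather*}
\lim_{z_{n+1}\to\infty}\left(z_{n+1}^{2}\,\frac{\p}{\p z_{n+1}}F_{g,n+1}(z_{1},\dots,z_{n},z_{n+1})\right),
\end{gather*}
so it suffices to compute this limit. Since the hypothesis $2g-2+n\ge 1$ gives $2g-2+(n+1)\ge 2$, Theorem~\ref{thm:diff-rec} applies to $F_{g,n+1}$; using the symmetry of $F_{g,n+1}$ I would write \eqref{eq:diff-rec} with the distinguished variable taken to be $z_{n+1}$ rather than $z_1$. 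Concretely this relabels the sums to run over $j=1,\dots,n$, replaces $z_{[\hat 1]}$ by $\{z_1,\dots,z_n\}$, and turns the genus-lowered term $F_{g-1,n+1}$ into $F_{g-1,n+2}(u_1,u_2,z_1,\dots,z_n)$. The remaining task is to multiply the resulting four blocks by $z_{n+1}^{2}$ and pass to the limit block by block.

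The analytic input is the single-variable decay at infinity. As a $1$-form in any one variable (the others fixed and generic), $W_{g,n}$ with $2g-2+n\ge 1$ has poles only at the ramification point $z=0$ and is holomorphic at $z=\infty$, so its coefficient is $O(z^{-2})$ there; by the definition \eqref{eq:open-Fgn} this gives $\p F_{g,n}/\p z_i = O(z_i^{-2})$ as $z_i\to\infty$ with the other variables fixed (a single-variable refinement of Lemma~\ref{lem:decay-Wgn-Fgn}). Combining this with $y(z)\frac{dx}{dz}(z)\sim 4z^{4}$, I would estimate each block of \eqref{eq:diff-rec}. The first summand of the first block is $O(z_{n+1}^{-8})$ --- a product of $O(z_{n+1}^{-2})$ from the rational prefactor, $O(z_{n+1}^{-4})$ from $1/(2y\frac{dx}{dz})$, and $O(z_{n+1}^{-2})$ from $\p_{z_{n+1}}F_{g,n}$ --- and likewise the whole second block is $O(z_{n+1}^{-8})$; both are killed by the factor $z_{n+1}^{2}$. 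The surviving contributions are the second summand of the first block, where $z_{n+1}^{2}\cdot(-2z_j)/(z_{n+1}^{2}-z_j^{2})\to -2z_j$, and the entire third block, where $z_{n+1}^{2}/(z_{n+1}^{2}-s^{2})\to 1$ so that the prefactor tends to $s/(\frac{dy}{dz}(s)\frac{dx}{dz}(s))$.

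Assembling the two surviving limits reproduces $E_{g,n}$ exactly: the second summand of the first block yields $\sum_{j}\frac{2z_j}{2y(z_j)\frac{dx}{dz}(z_j)}\frac{\p F_{g,n}}{\p z_j}$, while the third block yields the two $s$-terms of \eqref{eq:Egn}, with $F_{g-1,n+2}(u_1,u_2,z_1,\dots,z_n)$ now correctly in place. I expect the main difficulty to be bookkeeping rather than hard analysis: keeping the index relabeling consistent when moving \eqref{eq:diff-rec} from the first to the last variable, and making sure the discarded blocks are controlled by the \emph{single-variable} decay estimate (not merely the all-variables-large estimate of Lemma~\ref{lem:decay-Wgn-Fgn}). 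One must also note that the $F_{g-1,\cdots}$ terms occur only for $g\ge 1$ and that all free energies appearing lie in the stable range, so that sending $u_1=u_2=z_{n+1}\to\infty$ meets no diagonal pole.
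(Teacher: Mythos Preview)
Your proposal is correct and follows essentially the same route as the paper's own proof: invoke \eqref{eq:variation-openFgn-2} to rewrite $\p_t F_{g,n}$ as $\lim_{z_{n+1}\to\infty} z_{n+1}^{2}\,\p_{z_{n+1}}F_{g,n+1}$, apply the differential recursion \eqref{eq:diff-rec} (available since $2g-2+(n+1)\ge 2$) with the distinguished variable taken to be $z_{n+1}$, and read off the limit block by block using the decay of $W_{g,n}$ at infinity. Your write-up is in fact more detailed than the paper's, which simply cites \eqref{eq:asympt-Fgn} and \eqref{eq:diff-rec}; in particular your observation that one needs the \emph{single-variable} decay $\p F_{g,n}/\p z_i = O(z_i^{-2})$ (coming from holomorphicity of $W_{g,n}$ at $z_i=\infty$) rather than the joint estimate of Lemma~\ref{lem:decay-Wgn-Fgn} is a useful sharpening.
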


\begin{proof}
The equality \eqref{eq:variation-openFgn-2} shows that
the left hand-side of~\eqref{eq:t-derivative-recursion}
coincides with
\begin{gather*}
\lim_{z_{n+1} \rightarrow \infty} z_{n+1}^{2}
\frac{\p}{\p z_{n+1}} F_{g,n+1}(z_{1},\dots,z_{n},z_{n+1})
\end{gather*}
after the substitution $z_{i} \mapsto z(x_{i})$ for $i=1,\dots,n$.
Then, the equality follows from
the asymptotic behavior \eqref{eq:asympt-Fgn} of $F_{g,n}$'s
and the above dif\/ferential recursion \eqref{eq:diff-rec}
for $2g-2+ (n+1) \ge 2$.
\end{proof}

\section{Proof of main theorems}
\label{section:proof}

\subsection{Strategy for the proof}

What we will show here is that the formal series $S(x,t,\hbar)$
def\/ined in \eqref{eq:Riccati-solution} satisf\/ies
the system of equations \eqref{eq:Riccati-2} and
\eqref{eq:dt-Riccati-2}. In addition, we will also prove
the equality \eqref{eq:tau-conjecture-g}.
These equalities will be proved by an induction as follows.

\begin{thm} \label{thm:induction-scheme}
Let $[\bullet]_{\hbar^{m}}$ be the coefficient of
$\hbar^{m}$ in a formal series $\bullet$ of $\hbar$.
For an even integer $k \ge 2$, assume that
\begin{gather}
\frac{\p S_{m}}{\p x}(x,t) = P_{m}(x,t)\qquad
\text{for} \quad m=0,\dots,k-1, \nonumber\\
\frac{\p S_{m}}{\p t}(x,t) = \left[ \frac{1}{2(x-q)}\left(
\hbar \frac{\p S}{\p x} - p \right) \right]_{\hbar^{m}} \qquad
\text{for} \quad m=0,\dots,k-1, \nonumber\\
\frac{d F_{g}}{d t}(t) = \sigma_{2g}(t)  \qquad
\text{for} \quad g = k/2
\label{eq:assume-A-1}
\end{gather}
holds. Here $P_{m}(x,t) = P^{(+)}_{m}(x,t)$
is the coefficient of $\hbar^{m-1}$ in the formal solution
$P^{(+)}(x,t,\hbar)$ of the Riccati equation~\eqref{eq:Riccati}
constructed in Section~{\rm \ref{subsection:WKB-for-scalar-Lax}},
and~$\sigma_{2g}$ is given in~\eqref{eq:parity-sigma}.
Then, we have
\begin{itemize}\itemsep=0pt
\item[$(A)$] %
The following equality holds for $m = k$ and $k+1$:
\begin{gather}
\label{eq:claim-A-1}
\left[
\hbar^{2}\left( \left(\frac{\p S}{\p x}\right)^{2} +
\frac{\p^{2} S}{\p x^{2}} \right)
\right]_{\hbar^{m}}
 =  \left[ 2\hbar^{2} \frac{\p S}{\p t} +
\big(4x^{3}+2t x + p^{2} - 4q^{3}-2tq\big) \right]_{\hbar^{m}}.
\end{gather}

\item[$(B)$]
The following equalities hold:
\begin{gather}
\label{eq:claim-A-3}
\frac{\p S_{k}}{\p x}(x,t)   =   P_{k}(x,t), \qquad
\frac{\p S_{k}}{\p t}(x,t) = \left[ \frac{1}{2(x-q)}\left(
\hbar \frac{\p S}{\p x} - p \right) \right]_{\hbar^{k}}, \\
\frac{\p S_{k+1}}{\p x}(x,t)   =   P_{k+1}(x,t), \qquad
\frac{\p S_{k+1}}{\p t}(x,t) = \left[ \frac{1}{2(x-q)}\left(
\hbar \frac{\p S}{\p x} - p \right) \right]_{\hbar^{k+1}}, \label{eq:claim-B-3}
\\
\label{eq:claim-B-4}
\frac{dF_{g}}{dt}(t)   =    \sigma_{2g}(t)
\qquad \text{for} \quad g=(k+2)/2.
\end{gather}
\end{itemize}
\end{thm}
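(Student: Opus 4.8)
The theorem is an inductive step: assuming the three families of equalities up through order $\hbar^{k-1}$ (plus the $\tau$-function relation at genus $k/2$), we must establish part (A) — the Riccati equation~\eqref{eq:Riccati-2} at orders $\hbar^{k}$ and $\hbar^{k+1}$ — and part (B), which upgrades this to the individual identities for $S_{k}$, $S_{k+1}$, and the next $\tau$-relation. The key structural fact is that~\eqref{eq:Riccati-2} is equivalent to the scalar Riccati equation~\eqref{eq:Riccati}, whose solution $P^{(+)}$ I already control via Theorem~\ref{thm:Podd-and-t-derivative}. So the real content is showing that $\p S/\p x$ built from topological recursion agrees with $P^{(+)}$ order by order.

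**First I would establish part (A).** The plan is to verify~\eqref{eq:claim-A-1} at order $\hbar^{k}$ by comparing the $x$-derivative terms (coming from $\p S_{m}/\p x$ for $m \le k-1$, known to equal $P_{m}$ by hypothesis, together with the single new top-order contribution involving $\p S_{k}/\p x$) against the $t$-derivative term $2\hbar^{2}\p S/\p t$. Here is where Theorem~\ref{thm:variation-of-open-free-energy} enters decisively: it expresses $\p F_{g,n}/\p t$ — hence $\p S_{m}/\p t$ after principal specialization — entirely in terms of $F_{g,n}$'s via the operator $E_{g,n}$ in~\eqref{eq:Egn}. The strategy is to feed the differential recursion~\eqref{eq:diff-rec} into both sides so that the quadratic term $(\p S/\p x)^{2}$, the second-derivative term $\p^{2}S/\p x^{2}$, and the $t$-derivative all reduce to the same combination of lower-order open free energies. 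The matching should be forced by the structure of~\eqref{eq:diff-rec}: the first two blocks reproduce the standard Dumitrescu--Mulase terms, while the extra block at $z=s$ (the zero of $y$, i.e.\ the singular point $x=q_{0}$) accounts precisely for the residue of $z_{n+1}$ at infinity in the variation formula.

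**Then part (B) follows by a parity and normalization argument.** Once~\eqref{eq:claim-A-1} holds at orders $k$ and $k+1$, I know $\p S/\p x$ satisfies the Riccati equation through order $\hbar^{k+1}$; since the Riccati recursion~\eqref{eq:recursion-for-Pm} determines its solution uniquely given the leading term and the asymptotic normalization, I can conclude $\p S_{k}/\p x = P_{k}$ and $\p S_{k+1}/\p x = P_{k+1}$. The decay estimate~\eqref{eq:decay-of-Fgn-prin} guarantees $S_{m}(x,t) = O(x^{-\ast})$ with the correct behavior at $x=\infty$, pinning down the branch and eliminating any integration ambiguity. The second identities in~\eqref{eq:claim-A-3}--\eqref{eq:claim-B-3} — the $t$-derivatives — come directly from Theorem~\ref{thm:variation-of-open-free-energy} together with $\p S_{m}/\p x = P_{m}$. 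Finally, for~\eqref{eq:claim-B-4} I use the variation formula~\eqref{eq:dt-closed-Fg} for the closed free energy, relating $dF_{g}/dt$ to a residue of $z\,W_{g,1}$, and compare with the expansion~\eqref{eq:asymp-P} of $P^{(\pm)}$ at infinity, whose $x^{-3/2}$-coefficient is exactly $\sigma(t,\hbar)/2\hbar$; matching the order-$\hbar^{2g}$ piece yields $dF_{g}/dt = \sigma_{2g}$.

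**The main obstacle will be the bookkeeping in part (A).** The delicate point is that the topological recursion places residues \emph{only} at the ramification point $z=0$, whereas the $t$-variation naturally produces a residue at infinity; reconciling these requires the extra $z=s$ terms in the differential recursion~\eqref{eq:diff-rec}, and the genuine difficulty is verifying that these contributions assemble correctly at \emph{every} order rather than just in the low-order checks already done up to $\hbar^{4}$. Concretely, I expect the hard work to be showing that the nonlinear term $(\p S/\p x)^{2} = \sum_{a+b=m}(\p S_{a}/\p x)(\p S_{b}/\p x)$, expanded via the inductive hypothesis $\p S_{a}/\p x = P_{a}$, matches the products $F_{g_{1},|I|+1}F_{g_{2},|J|+1}$ appearing in both~\eqref{eq:diff-rec} and~\eqref{eq:Egn} after principal specialization — a combinatorial identity among the partition sums that mirrors the quadratic structure of the Riccati equation but must be tracked through the change of variables $z = \sqrt{x+2q_{0}}$ and the specialization $z_{i}=z$.
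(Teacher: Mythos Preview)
Your broad architecture is right, but two specific mechanisms are missing, one in each part.

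\textbf{In part (A).} After principal specialization, the differential recursion \eqref{eq:diff-rec} and the variation formula \eqref{eq:Egn} do \emph{not} collapse to the same expression: there is a leftover coming from the boundary case $n=1$. Summing the $E_{g,n-1}$'s over $n\ge 2$ recovers $\partial_t S_m$, but the recursion side (Lemma~\ref{prop:partial-Riccati}) also contains the single term $G_{g,1}$ with $2g-1=m$, which has no $E$-counterpart. The paper isolates this as Lemma~\ref{lem:Gg1}, showing $\tfrac{2y}{dx/dz}\,G_{g,1}=2\,dF_g/dt$; and \emph{this} is precisely where the third induction hypothesis $dF_{k/2}/dt=\sigma_k$ enters, matching the $2\sigma_k$ produced by $[p^2-4q^3-2tq]_{\hbar^k}$ on the right of~\eqref{eq:claim-A-1}. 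You never invoke that hypothesis, so your outline cannot close the identity at order~$\hbar^k$.

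\textbf{In part (B), the $\tau$-relation.} Your plan to obtain $dF_g/dt=\sigma_{2g}$ for $g=(k+2)/2$ by comparing the residue~\eqref{eq:dt-closed-Fg} with the $x^{-3/2}$-coefficient in~\eqref{eq:asymp-P} is circular: that coefficient of $P_{k+2}$ is indeed $\sigma_{k+2}/2$, but at this stage you do not yet know $\partial_x S_{k+2}=P_{k+2}$. After the preceding steps one has
\[
\partial_x S_0\,\bigl(\partial_x S_{k+2}-P_{k+2}\bigr) \;=\; \frac{dF_{(k+2)/2}}{dt}-\sigma_{k+2},
\]
and sending $x\to\infty$ gives only a tautology (both sides compute the same difference of leading coefficients). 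The paper's device is different and essential: evaluate at the \emph{double turning point} $x=q_0$, where $\partial_x S_0 = y(z(q_0))=0$, using the holomorphicity of $S_m$ and $P_m$ there from Theorem~\ref{thm:Podd-and-t-derivative}(ii). That kills the left side and forces the right side to vanish.

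A smaller point: for the $t$-derivative identities in~\eqref{eq:claim-A-3}--\eqref{eq:claim-B-3} the paper does not go back through $E_{g,n}$. Once $\partial_x S_k = P_k$ is known, it writes $S_k=\int_\infty^x P_k\,dx'$ (justified by~\eqref{eq:decay-of-Fgn-prin}) and differentiates under the integral using the Kawai--Takei relation~\eqref{eq:dt-P}. That route is shorter than re-entering the topological recursion.
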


It is obvious that our main theorems (Theorems~\ref{conj:2} and~\ref{conj:1})
follow from the statements
in~$(A)$ and~$(B)$. The rest of this section is devoted to
give a proof of~$(A)$ and~$(B)$.

\subsection[Proof of $(A)$]{Proof of $\boldsymbol{(A)}$}
\label{section:proof-of-A}

We emphasize that the results shown in Section~\ref{subsec:proof-A-1} below
are proved without using the assumption~\eqref{eq:assume-A-1}.
We also note that we only use the second equality
in assumption~\eqref{eq:assume-A-1} in Section~\ref{subsec:proof-A-2}
to prove~$(A)$.

\subsubsection{Computation of principal specializations}
\label{subsec:proof-A-1}

Def\/ine
\begin{gather}
G_{g,n}(z_{1},\dots,z_{n}) :=
\frac{\p F_{g,n}}{\p z_{1}}(z_{1},\dots,z_{n})-\sum_{j=2}^{n} \frac{-2z_{j}}{z_{1}^{2}-z_{j}^{2}}
\biggl( \frac{1}{2y(z_{1})\frac{dx}{dz}(z_{1})}
\frac{\p F_{g,n-1}}{\p z_{1}}(z_{[\hat{j}]})
\nonumber\\
\hphantom{G_{g,n}(z_{1},\dots,z_{n}) :=}{}
 - \frac{1}{2y(z_{j})\frac{dx}{dz}(z_{j})}
\frac{\p F_{g,n-1}}{\p z_{j}}(z_{[\hat{1}]}) \biggr)
\nonumber\\
\hphantom{G_{g,n}(z_{1},\dots,z_{n}) :=}{}
+ \frac{1}{2y(z_{1})\frac{dx}{dz}(z_{1})} \frac{\p^{2}}{\p u_{1} \p u_{2}}
\biggl( F_{g-1,n+1}(u_{1},u_{2},z_{[\hat{1}]})\nonumber\\
\hphantom{G_{g,n}(z_{1},\dots,z_{n}) :=}{}
+ \sum_{\substack{g_{1}+g_{2}=g \\ I \sqcup J = [\hat{1}]}}^{\rm stable}
F_{g_{1}, |I|+1}(u_{1},z_{I}) F_{g_{2}, |J|+1}(u_{2}, z_{J})
\biggr)\Biggr|_{u_{1}=u_{2}=z_{1}}.
\label{eq:Ggn}
\end{gather}

The technique developed in
\cite{DM14, DM14-2} enables us to show the following.

\begin{lem} [cf.~\protect{\cite[Theorem 6.5]{DM14}}]
\label{prop:partial-Riccati}
For $m \ge 2$, we have
\begin{gather} \label{eq:partial-Riccati}
\Biggl(\frac{2y(z)}{\frac{dx}{dz}(z)}
\sum_{\substack{2g-2+n=m \\ g\ge0, n\ge 1}}
\frac{G_{g,n}(z,\dots,z)}{(n-1)!} \Biggr)\Biggl|_{z=z(x)} =
\sum_{\substack{a+b=m+1 \\ a,b \ge 0}}
\frac{\p S_{a}}{\p x} \frac{\p S_{b}}{\p x} +
\frac{\p^{2} S_{m}}{\p x^{2}} - \frac{1}{x-q_{0}} \frac{\p S_{m}}{\p x}.
\end{gather}
\end{lem}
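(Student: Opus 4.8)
The plan is to take the principal specialization of the differential recursion in Theorem~\ref{thm:diff-rec} and match it term-by-term against the Riccati-type expression on the right hand-side of~\eqref{eq:partial-Riccati}. First I would start from equation~\eqref{eq:diff-rec}, which expresses $\frac{\p F_{g,n}}{\p z_{1}}$ as a sum of the three ``$z_{1}$-blocks'' (packaged into $G_{g,n}$ via~\eqref{eq:Ggn}) plus the extra ``$z=s$'' block arising from the singular point of the spectral curve. I would then set all variables equal, $z_{1}=\dots=z_{n}=z$, multiply through by $\frac{2y(z)}{\frac{dx}{dz}(z)}$, and sum over all $(g,n)$ with $2g-2+n=m$ weighted by $\frac{1}{(n-1)!}$. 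The left hand-side of~\eqref{eq:partial-Riccati} is by definition exactly the $G_{g,n}$ contribution; the remaining task is to show that the weighted, specialized sum of the ordinary $F$-blocks reproduces $\sum_{a+b=m+1} \frac{\p S_{a}}{\p x}\frac{\p S_{b}}{\p x} + \frac{\p^{2}S_{m}}{\p x^{2}} - \frac{1}{x-q_{0}}\frac{\p S_{m}}{\p x}$.

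The key combinatorial mechanism is that $S_{m}$ is defined as the weighted principal specialization $\sum_{2g-2+n=m-1} \frac{1}{n!}F_{g,n}(z,\dots,z)$, so differentiating $S_{a}$ and $S_{b}$ in $x$ and taking a product generates, after reindexing, precisely the ``splitting'' terms $\sum_{g_{1}+g_{2}=g,\,I\sqcup J}F_{g_{1},|I|+1}F_{g_{2},|J|+1}$ together with the correct multinomial weights. The chain rule converts $\frac{\p}{\p x}$ into $\frac{1}{\frac{dx}{dz}}\frac{\p}{\p z}$, which is where the factor $\frac{1}{2y(z)\frac{dx}{dz}(z)}$ in front of each block originates (using $\frac{\p S_{0}}{\p x}=y$). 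I would carefully track how the principal specialization of a multi-differential collapses the $n-1$ diagonal derivatives into the single factor $\frac{1}{(n-1)!}$, and how the sum $\sum_{j=2}^{n}$ over the first block, after setting all $z_{j}=z$, contributes the $\frac{\p^{2}S_{m}}{\p x^{2}}$ and $-\frac{1}{x-q_{0}}\frac{\p S_{m}}{\p x}$ terms via the limit of $\frac{-2z_{j}}{z_{1}^{2}-z_{j}^{2}}$ as $z_{j}\to z_{1}$, which produces a $\frac{\p}{\p z}$ acting on the $(g,n-1)$ data.

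I expect the main obstacle to be the careful bookkeeping of the symmetry factors: each $W_{g,n}$ and $F_{g,n}$ is fully symmetric, so when all arguments are identified the $\sum_{j=2}^{n}$ and the partition sum $\sum_{I\sqcup J}$ collapse into multiplicities that must exactly match the multinomial coefficients generated by expanding $\big(\sum_{m}\hbar^{m-1}S_{m}\big)$ in the Riccati equation. In particular, verifying that the product $\sum_{a+b=m+1}\frac{\p S_{a}}{\p x}\frac{\p S_{b}}{\p x}$ emerges with the correct $\frac{1}{(n-1)!}$ rather than $\frac{1}{n!}$ requires the identity $\frac{\p}{\p z}\big(F_{g,n}(z,\dots,z)\big) = n\,\frac{\p F_{g,n}}{\p z_{1}}(z,\dots,z)$, which supplies the extra factor of $n$ converting $\frac{1}{n!}$ into $\frac{1}{(n-1)!}$. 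Once this bookkeeping is pinned down, the equality~\eqref{eq:partial-Riccati} follows by comparing coefficients of each monomial in the $F$-data; the geometric input (the chain rule through $x(z)$ and the role of $y$) is routine, but the combinatorial matching is where I would concentrate the argument.
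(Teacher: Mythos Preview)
Your approach is essentially the paper's: take the defining expression~\eqref{eq:Ggn} for $G_{g,n}$, principally specialize, sum with weights $1/(n-1)!$, and match the resulting $S$-expressions against the right-hand side (the paper cites \cite[Theorem~6.5]{DM14} for exactly the combinatorial bookkeeping you describe in your second and third paragraphs). Two small corrections to your outline: the differential recursion~\eqref{eq:diff-rec} itself is not used in this lemma---only the definition~\eqref{eq:Ggn} is needed---and the term $-\tfrac{1}{x-q_0}\tfrac{\p S_m}{\p x}$ does not emerge directly from the diagonal limit of $\tfrac{-2z_j}{z_1^2-z_j^2}$ but rather at the very last step, when one uses the explicit formula $\tfrac{\p S_1}{\p x} = -\tfrac{1}{2y}\tfrac{\p y}{\p x} + \tfrac{1}{2(x-q_0)}$ (together with $\tfrac{\p S_0}{\p x}=y$) to extend the quadratic sum $\sum_{a+b=m+1}\tfrac{\p S_a}{\p x}\tfrac{\p S_b}{\p x}$ from $a,b\ge 2$ to $a,b\ge 0$.
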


\begin{proof}
As is shown in \cite[Theorem 6.5]{DM14},
applying $\sum\limits_{2g-2+n=m} \frac{1}{(n-1)!}$
and the principal specia\-li\-zation to~\eqref{eq:Ggn}, we have
\begin{gather*}
\sum_{\substack{2g-2+n=m \\ g\ge0, n\ge 1}}
\frac{G_{g,n}(z,\dots,z)}{(n-1)!} = \frac{1}{2y(z)\frac{dx}{dz}(z)}
\Biggl( \sum_{\substack{a+b=m+1 \\ a,b \ge 2}}
\frac{\p S_{a}(x(z))}{\p z} \frac{\p S_{b}(x(z))}{\p z} +
\frac{\p^{2}S_{m}(x(z))}{\p z^{2}} \Biggr)
\\
\hphantom{\sum_{\substack{2g-2+n=m \\ g\ge0, n\ge 1}}
\frac{G_{g,n}(z,\dots,z)}{(n-1)!} =}{}
+ \frac{\p S_{m+1}(x(z))}{\p z}
+ \left\{ \frac{\p }{\p z}\left(
\frac{1}{2 y(z) \frac{dx}{dz}(z)} \right) \right\}
\frac{\p S_{m}(x(z))}{\p z}.
\end{gather*}
After the coordinate change $z = z(x)$, the right hand-side becomes
\begin{gather*}
\frac{\frac{dx}{dz}(z(x))}{2y(z(x))}\Biggl(
\sum_{\substack{a+b=m+1 \\ a,b \ge 2}}
\frac{\p S_{a}}{\p x} \frac{\p S_{b}}{\p x} +
\frac{\p^{2}S_{m}}{\p x^{2}} + 2y(z(x)) \frac{\p S_{m+1}}{\p x}
- \frac{1}{y(z(x))}\frac{\p y(z(x))}{\p x} \frac{\p S_{m}}{\p x} \Biggr).
\end{gather*}
Then, the desired equality \eqref{eq:partial-Riccati} follows
from the above equality and
\begin{gather*}
\frac{\p S_{0}}{\p x} = y(z(x)), \qquad
\frac{\p S_{1}}{\p x} =
-\frac{1}{2 y(z(x))} \frac{\p y(z(x))}{\p x} + \frac{1}{2(x-q_{0})}.\tag*{\qed}
\end{gather*}
\renewcommand{\qed}{}
\end{proof}

Note that the right hand-side of~\eqref{eq:partial-Riccati} coincides with
\begin{gather*}
\biggl[ \hbar^{2}\biggl( \left(\frac{\p S}{\p x}\right)^{2} +
\frac{\p^{2} S}{\p x^{2}} \biggr)  \biggr]_{\hbar^{m+1}}
- \frac{1}{x-q_{0}} \frac{\p S_{m}}{\p x}.
\end{gather*}
Thus, Lemma \ref{prop:partial-Riccati} relates
the principal specialization of~$G_{g,n}$
to the left hand-side of~\eqref{eq:claim-A-1}.
Next we also relate them to the right hand-side of~\eqref{eq:claim-A-1}.

\begin{lem} \label{lemma:sum-prin-G-and-E}
Let $E_{g,n}(z_{1},\dots,z_{n})$
be the functions defined by~\eqref{eq:Egn}.
Then, the following equality holds for $m \ge 2$
\begin{gather} \label{eq:sum-prin-G-and-E}
\sum_{\substack{2g-2+n=m \\ g \ge 0, n\ge 2 }}
\left(
\frac{2y(z)}{\frac{dx}{dz}(z)}
\frac{G_{g,n}(z,\dots,z)}{(n-1)!} -
\frac{2 E_{g,n-1}(z,\dots,z)}{(n-1)!}\right)\Biggr|_{z=z(x)}
 = -\frac{1}{x-q_{0}} \frac{\p S_{m}}{\p x}.
\end{gather}
\end{lem}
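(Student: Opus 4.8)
The plan is to recognize $G_{g,n}$ as exactly the ``$s$-block'' of the differential recursion. Comparing the definition \eqref{eq:Ggn} of $G_{g,n}$ with the recursion \eqref{eq:diff-rec}, the first two blocks cancel identically, so that
\[
G_{g,n}(z_1,\dots,z_n) = \frac{s}{\frac{dy}{dz}(s)\frac{dx}{dz}(s)(z_1^2-s^2)}\Biggl[\sum_{j=2}^{n}\frac{-2z_j}{z_j^2-s^2}\frac{\partial F_{g,n-1}}{\partial z_1}\big(s,z_{[\hat1,\hat j]}\big) + \frac{\partial^2}{\partial u_1\partial u_2}\big(\cdots\big)\Big|_{u_1=u_2=s}\Biggr],
\]
where $s=(3q_0)^{1/2}$ and $(\cdots)$ denotes the $F_{g-1,n+1}$ term together with the stable splittings. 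First I would set $z_1=\dots=z_n=z$ and multiply by $\frac{2y(z)}{\frac{dx}{dz}(z)}$, so that the $(n-1)$ now-equal summands of the $j$-sum collapse to $(n-1)$ times a single term.

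The decisive point is a pair of algebraic identities that follow from the parametrization \eqref{eq:par-rep}, namely $y(z)=2z(z^2-3q_0)$, $\frac{dx}{dz}(z)=2z$, and $s^2=3q_0$:
\[
\frac{2y(z)}{\frac{dx}{dz}(z)\,(z^2-s^2)} = 2, \qquad \frac{2z}{y(z)} = \frac{1}{z^2-3q_0} = \frac{1}{x-q_0}.
\]
Using the first identity, the overall kernel prefactor of the specialized $G$-block reduces to $2\,\frac{s}{\frac{dy}{dz}(s)\frac{dx}{dz}(s)}$, which is precisely the prefactor of the second and third ($s$-dependent) terms of $\frac{2E_{g,n-1}(z,\dots,z)}{(n-1)!}$ in \eqref{eq:Egn}. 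I would then check that the two brackets agree term by term: the $j$-sum reproduces the second term of $E_{g,n-1}$, while the $\partial^2/\partial u_1\partial u_2$ expression (with $F_{g-1,n+1}$ and the stable splittings of $n-1$ equal variables) reproduces the third term, after relabeling the index set $[\hat1]=\{2,\dots,n\}$ as $\{1,\dots,n-1\}$.

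Granting this identification, the entire $s$-dependent part cancels in the difference $\frac{2y(z)}{\frac{dx}{dz}(z)}\frac{G_{g,n}}{(n-1)!}-\frac{2E_{g,n-1}}{(n-1)!}$, and only the negative of the first (``bulk'') term of $E_{g,n-1}$ survives:
\[
-\frac{2}{(n-1)!}\sum_{j=1}^{n-1}\frac{2z}{2y(z)\frac{dx}{dz}(z)}\frac{\partial F_{g,n-1}}{\partial z_j}(z,\dots,z) = -\frac{2}{(n-2)!}\,\frac{z}{y(z)\frac{dx}{dz}(z)}\,\frac{\partial F_{g,n-1}}{\partial z_1}(z,\dots,z).
\]
Summing over $2g-2+n=m$ with $n\ge2$ and reindexing $n'=n-1$ (so $2g-2+n'=m-1$, $n'\ge1$), the symmetry relation $\frac{1}{(n'-1)!}\frac{\partial F_{g,n'}}{\partial z_1}(z,\dots,z)=\frac{1}{n'!}\frac{d}{dz}F_{g,n'}(z,\dots,z)$ together with the definition of $S_m$ as the principal specialization of $\sum F_{g,n'}/n'!$ lets me recognize the surviving sum as $-\frac{2z}{y(z)\frac{dx}{dz}(z)}\frac{d}{dz}\big(S_m(x(z),t)\big)=-\frac{2z}{y(z)}\frac{\partial S_m}{\partial x}$ by the chain rule. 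The second identity above then converts this to $-\frac{1}{x-q_0}\frac{\partial S_m}{\partial x}$, which is the right-hand side of \eqref{eq:sum-prin-G-and-E}.

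The step I expect to require the most care is the term-by-term matching between the $s$-block of $G_{g,n}$ and the $s$-terms of $E_{g,n-1}$: one must track the combinatorics of the stable splittings and the precise arguments of each $F_{g',n'}$ under both the principal specialization and the shift $n\mapsto n-1$, and confirm that every recursion-kernel prefactor collapses through the identity $y(z)=\frac{dx}{dz}(z)(z^2-s^2)$. Once this cancellation is in place, the remaining resummation is a routine consequence of the definition of $S_m$ and the symmetry of the open free energies.
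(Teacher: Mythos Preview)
Your proposal is correct and follows essentially the same route as the paper: both arguments identify $G_{g,n}$ with the $s$-block of the differential recursion \eqref{eq:diff-rec}, take the principal specialization, use the identity $\frac{2y(z)}{\frac{dx}{dz}(z)(z^2-s^2)}=2$ to match the $s$-dependent terms of $\frac{2y(z)}{\frac{dx}{dz}(z)}\frac{G_{g,n}}{(n-1)!}$ with those of $\frac{2E_{g,n-1}}{(n-1)!}$, and observe that only the negative of the first (``bulk'') term of $E_{g,n-1}$ survives, which sums to $-\frac{1}{x-q_0}\frac{\partial S_m}{\partial x}$. The paper records this cancellation compactly as equation \eqref{eq:prin-G-and-E}, whereas you spell out the two algebraic identities and the reindexing $n'=n-1$ more explicitly, but the underlying mechanism is identical.
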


\begin{proof}
Theorem \ref{thm:diff-rec} shows that
\eqref{eq:Ggn} can also be written as
\begin{gather}
G_{g,n}(z_{1},\dots,z_{n}) =
\frac{s}{\frac{dy}{dz}(s) \frac{dx}{dz}(s)(z_{1}^{2}-s^{2})}
\sum_{j=2}^{n} \frac{-2 z_{j}}{z_{j}^{2}-s^{2}}
\frac{\p F_{g,n-1}}{\p z_{1}}(s,z_{[\hat{1}, \hat{j}]})\nonumber
\\
\hphantom{G_{g,n}(z_{1},\dots,z_{n}) =}{}
+ \frac{s}{\frac{dy}{dz}(s) \frac{dx}{dz}(s)(z_{1}^{2}-s^{2})}
\frac{\p^{2}}{\p u_{1} \p u_{2}}
\biggl( F_{g-1,n+1}(u_{1},u_{2},z_{[\hat{1}]}) \nonumber\\
\hphantom{G_{g,n}(z_{1},\dots,z_{n}) =}{}+
\sum_{\substack{g_{1}+g_{2}=g \\ I \sqcup J = [\hat{1}]}}^{\rm stable}
F_{g_{1}, |I|+1}(u_{1},z_{I}) F_{g_{2}, |J|+1}(u_{2}, z_{J})
\biggr)\Biggr|_{u_{1}=u_{2}=s}.\label{eq:Ggn-2}
\end{gather}
Taking the principal specialization of \eqref{eq:Ggn-2} and~\eqref{eq:Egn} with $n \mapsto n-1$, we have
\begin{gather}
\frac{2y(z)}{\frac{dx}{dz}(z)}
\frac{G_{g,n}(z,\dots,z)}{(n-1)!} -
\frac{2 E_{g,n-1}(z,\dots,z)}{(n-1)!} \nonumber\\
\qquad{} =
- \frac{4z}{2y(z)\frac{dx}{dz}(z)}
\left(\frac{1}{(n-1)!} \frac{\p}{\p z} F_{g,n-1}(z,\dots,z)\right)\label{eq:prin-G-and-E}
\end{gather}
for any $g \ge 0$ and $n \ge 2$ satisfying
$2g-2+n \ge 2$.
Then, summing up~\eqref{eq:prin-G-and-E}
for $g \ge 0$, $n \ge 2$ satisfying $2g-2+n = m$,
we obtain~\eqref{eq:sum-prin-G-and-E} after
the coordinate change $z = z(x)$.
\end{proof}

On the other hand, Theorem~\ref{thm:variation-of-open-free-energy} implies that
\begin{gather*} 
\sum_{\substack{2g-2+n=m \\ g \ge 0,\,  n\ge 2 }}
\frac{2E_{g,n-1}(z,\dots,z)}{(n-1)!} \Biggr|_{z=z(x)}
= 2\frac{\p}{\p t} S_{m}
= \left[ 2\hbar^{2} \frac{\p S}{\p t} \right]_{\hbar^{m+1}}
\end{gather*}
holds for $m \ge 2$.
Therefore we have the following.

\begin{lem} \label{lem:tderivative-and-partial-Riccati}
The equality
\begin{gather*}
\left[ \hbar^{2}\left( \left(\frac{\p S}{\p x}\right)^{2} +
\frac{\p^{2} S}{\p x^{2}} \right)  \right]_{\hbar^{m+1}} \\
\qquad{} =
\left[ 2\hbar^{2} \frac{\p S}{\p t} \right]_{\hbar^{m+1}} + \sum_{\substack{2g-2+n=m \\ g \ge 0,\, n\ge {1} }}
\frac{2y(z)}{\frac{dx}{dz}(z)}
\frac{G_{g,n}(z,\dots,z)}{(n-1)!} -
\sum_{\substack{2g-2+n=m \\ g \ge 0, n\ge {2} }}
\frac{2y(z)}{\frac{dx}{dz}(z)}
\frac{G_{g,n}(z,\dots,z)}{(n-1)!}
\end{gather*}
holds for $m \ge 2$.
\end{lem}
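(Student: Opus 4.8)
The plan is to prove this lemma purely by assembling the three results established immediately above --- Lemma~\ref{prop:partial-Riccati}, Lemma~\ref{lemma:sum-prin-G-and-E}, and the consequence of Theorem~\ref{thm:variation-of-open-free-energy} recorded just before its statement --- with no new analytic input. It is essentially a bookkeeping identity whose effect is to isolate the $n=1$ contribution in the principal specialization.

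First I would rewrite the left-hand side using Lemma~\ref{prop:partial-Riccati} together with the remark identifying the right-hand side of~\eqref{eq:partial-Riccati} with $\bigl[\hbar^{2}((\p S/\p x)^{2}+\p^{2}S/\p x^{2})\bigr]_{\hbar^{m+1}}-\frac{1}{x-q_{0}}\frac{\p S_{m}}{\p x}$. Rearranging that statement yields
\[
\Bigl[\hbar^{2}\bigl((\p S/\p x)^{2}+\p^{2}S/\p x^{2}\bigr)\Bigr]_{\hbar^{m+1}}
=\frac{2y(z)}{\frac{dx}{dz}(z)}\sum_{\substack{2g-2+n=m\\ g\ge0,\,n\ge1}}\frac{G_{g,n}(z,\dots,z)}{(n-1)!}\Big|_{z=z(x)}+\frac{1}{x-q_{0}}\frac{\p S_{m}}{\p x}.
\]
So the whole task reduces to showing that the correction term $\frac{1}{x-q_{0}}\frac{\p S_{m}}{\p x}$ equals $\bigl[2\hbar^{2}\p S/\p t\bigr]_{\hbar^{m+1}}$ minus the $n\ge2$ part of the same $G$-sum.

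Second, I would establish exactly that. Rearranging Lemma~\ref{lemma:sum-prin-G-and-E} gives
\[
\sum_{\substack{2g-2+n=m\\ g\ge0,\,n\ge2}}\frac{2y(z)}{\frac{dx}{dz}(z)}\frac{G_{g,n}(z,\dots,z)}{(n-1)!}\Big|_{z=z(x)}
=\sum_{\substack{2g-2+n=m\\ g\ge0,\,n\ge2}}\frac{2E_{g,n-1}(z,\dots,z)}{(n-1)!}\Big|_{z=z(x)}-\frac{1}{x-q_{0}}\frac{\p S_{m}}{\p x},
\]
and the $E$-sum on the right is precisely $\bigl[2\hbar^{2}\p S/\p t\bigr]_{\hbar^{m+1}}$ by the consequence of Theorem~\ref{thm:variation-of-open-free-energy}. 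Solving for $\frac{1}{x-q_{0}}\frac{\p S_{m}}{\p x}$ and substituting back into the first display produces the asserted identity, the difference between the $n\ge1$ and $n\ge2$ $G$-sums being exactly the surviving $n=1$ term.

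The main difficulty here is bookkeeping rather than analysis. I would need to confirm that Lemma~\ref{lemma:sum-prin-G-and-E} and the Theorem~\ref{thm:variation-of-open-free-energy} consequence both run over the same range $n\ge2$ with the index shift $n\mapsto n-1$ in $E$, so that the $E$-sums cancel cleanly, and that the principal specialization $z=z(x)$ is applied uniformly throughout. One also checks $m\ge2$, which guarantees that every invoked lemma is in force. I note finally that the polynomial term $4x^{3}+2tx+p^{2}-4q^{3}-2tq$ of claim~$(A)$ does not appear in this lemma; it enters only later, when the leftover $n=1$ and $n=0$ pieces are evaluated, and so plays no role at this step.
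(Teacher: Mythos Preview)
Your proposal is correct and is exactly the argument the paper has in mind: the lemma is stated there with no explicit proof, just ``Therefore we have the following,'' because it is obtained precisely by combining Lemma~\ref{prop:partial-Riccati}, Lemma~\ref{lemma:sum-prin-G-and-E}, and the displayed consequence of Theorem~\ref{thm:variation-of-open-free-energy} as you describe. Your sign checks, index-range checks ($n\ge1$ vs.\ $n\ge2$, the shift $n\mapsto n-1$ in $E$), and the observation that the polynomial term enters only at the next step are all accurate.
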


\subsubsection[Completion of the proof of $(A)$]{Completion of the proof of $\boldsymbol{(A)}$}
\label{subsec:proof-A-2}

Lemma \ref{lem:tderivative-and-partial-Riccati} implies
\begin{gather}
\left[ \hbar^{2}\left( \left(\frac{\p S}{\p x}\right)^{2}
+ \frac{\p^{2} S}{\p x^{2}} \right)  \right]_{\hbar^{m+1}}
= \left[ 2\hbar^{2} \frac{\p S}{\p t} \right]_{\hbar^{m+1}}
 \qquad \text{if $m$ is even},\nonumber\\
\left[ \hbar^{2}\left( \left(\frac{\p S}{\p x}\right)^{2}
+ \frac{\p^{2} S}{\p x^{2}} \right)  \right]_{\hbar^{m+1}}
= \left[ 2\hbar^{2} \frac{\p S}{\p t} \right]_{\hbar^{m+1}}
+ \frac{2y}{\frac{dx}{dz}} G_{(m+1)/2,1}
\qquad \text{if $m$ is odd.}
\label{eq:compare-coefficients}
\end{gather}
On the other hand, it follows from \eqref{eq:parity-sigma} that
\begin{gather*}
\bigl[4x^{3}+2tx + p^{2} -4q^{3} -2 tq \bigr]_{\hbar^{m+1}}
=
\begin{cases}
0 & \text{if $m$ is even}, \\
2\sigma_{m+1} & \text{if $m$ is odd}.
\end{cases}
\end{gather*}
Therefore, under the assumption \eqref{eq:assume-A-1},
the desired equality~\eqref{eq:claim-A-1}
follows from~\eqref{eq:compare-coefficients}
and Lem\-ma~\ref{lem:Gg1} below.

\begin{lem} \label{lem:Gg1}
For $g \ge 2$, we have
\begin{gather} \label{eq:n-is-1-part-in-Ggn}
\frac{2y(z)}{\frac{dx}{dz}(z)} G_{g,1}(z) =
 2 \frac{dF_{g}}{dt}(t).
\end{gather}
\end{lem}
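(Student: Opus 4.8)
The plan is to compute both sides of~\eqref{eq:n-is-1-part-in-Ggn} explicitly, using the alternative form~\eqref{eq:Ggn-2} of $G_{g,n}$ supplied by Theorem~\ref{thm:diff-rec} together with the variation formula of Theorem~\ref{thm:variation}(ii). First I would specialize~\eqref{eq:Ggn-2} to $n=1$. Since $[\hat{1}]=\emptyset$, the sum over $j$ there is empty and the remaining block is evaluated entirely at $u_{1}=u_{2}=s$, so
\begin{gather*}
G_{g,1}(z_{1}) = \frac{s}{\frac{dy}{dz}(s)\frac{dx}{dz}(s)}\,\frac{C_{g}}{z_{1}^{2}-s^{2}},
\qquad
C_{g} := \frac{\p^{2}}{\p u_{1}\p u_{2}}\biggl( F_{g-1,2}(u_{1},u_{2}) + \sum_{\substack{g_{1}+g_{2}=g\\ g_{1},g_{2}\ge 1}} F_{g_{1},1}(u_{1})F_{g_{2},1}(u_{2}) \biggr)\Big|_{u_{1}=u_{2}=s},
\end{gather*}
where $C_{g}$ is a constant independent of $z_{1}$. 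Using the parametrization~\eqref{eq:par-rep}, for which $\frac{dx}{dz}=2z$, $y(z)=2z(z^{2}-3q_{0})$, $s^{2}=3q_{0}$, $\frac{dy}{dz}(s)=12q_{0}$ and $\frac{dx}{dz}(s)=2s$, the prefactor $\frac{2y(z)}{\frac{dx}{dz}(z)}=2(z^{2}-3q_{0})$ cancels the factor $z^{2}-s^{2}=z^{2}-3q_{0}$ in the denominator, yielding
\begin{gather*}
\frac{2y(z)}{\frac{dx}{dz}(z)}\,G_{g,1}(z) = \frac{C_{g}}{12q_{0}},
\end{gather*}
which is independent of $z$. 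It then remains only to prove that $\frac{dF_{g}}{dt}=\frac{C_{g}}{24q_{0}}$.

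For the right-hand side I would apply Theorem~\ref{thm:variation}(ii), which gives $\frac{dF_{g}}{dt}=-\Res_{z=\infty} z\,W_{g,1}(z)=-\Res_{z=\infty} z\,\frac{\p F_{g,1}}{\p z}(z)\,dz$ since $W_{g,1}=d_{z}F_{g,1}$. Substituting the $n=1$ case of the differential recursion~\eqref{eq:diff-rec}, the derivative $\frac{\p F_{g,1}}{\p z}$ splits into a local term $-\frac{D(z)}{2y(z)\frac{dx}{dz}(z)}$, where $D(z)$ denotes the same double derivative evaluated at $u_{1}=u_{2}=z$ (so $C_{g}=D(s)$), and the $s$-pole term $\frac{s}{\frac{dy}{dz}(s)\frac{dx}{dz}(s)}\frac{C_{g}}{z^{2}-s^{2}}$. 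The residue of the $s$-pole term is elementary: since $\Res_{z=\infty}\frac{z}{z^{2}-s^{2}}\,dz=-1$, it contributes $\frac{sC_{g}}{\frac{dy}{dz}(s)\frac{dx}{dz}(s)}=\frac{C_{g}}{24q_{0}}$, precisely half of the quantity computed above.

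The crux is therefore to show that the local term contributes nothing to the residue at infinity. Here I would invoke Lemma~\ref{lem:decay-Wgn-Fgn}(ii): since $F_{g,n}=c'_{g,n}/(z_{1}\cdots z_{n})+O(z_{1}^{-3}\cdots z_{n}^{-3})$ and is odd in each variable, differentiating twice and specializing to the diagonal shows that both the $F_{g-1,2}$ contribution and each product $F_{g_{1},1}F_{g_{2},1}$ are $O(z^{-4})$, whence $D(z)=O(z^{-4})$ as $z\to\infty$. Because $2y(z)\frac{dx}{dz}(z)=8z^{2}(z^{2}-3q_{0})\sim 8z^{4}$, the integrand $\frac{zD(z)}{2y(z)\frac{dx}{dz}(z)}$ is $O(z^{-7})$ near infinity and thus has vanishing residue there. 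Combining the two contributions gives $\frac{dF_{g}}{dt}=\frac{C_{g}}{24q_{0}}$, and comparison with the left-hand side establishes~\eqref{eq:n-is-1-part-in-Ggn}.

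I expect the decay estimate for $D(z)$ to be the only genuinely delicate point, since it requires that taking a $u_{1}$- and a $u_{2}$-derivative \emph{before} the diagonal specialization $u_{1}=u_{2}=z$ preserve enough decay; once that is in hand, the remainder is a direct residue computation driven by the explicit rational parametrization~\eqref{eq:par-rep}.
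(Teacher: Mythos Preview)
Your proof is correct and follows essentially the same route as the paper: both specialize~\eqref{eq:Ggn-2} to $n=1$ to evaluate the left-hand side, then compute $\frac{dF_{g}}{dt}$ by inserting the $n=1$ differential recursion~\eqref{eq:diff-rec} into the variation formula~\eqref{eq:dt-closed-Fg} and killing the local-term contribution with the decay estimates of Lemma~\ref{lem:decay-Wgn-Fgn}. The paper packages the last step slightly more tersely (writing $zW_{g,1}(z)=\frac{1}{8s^{2}}C_{g}\,\frac{dz}{z}+O(1)$ directly), but your more explicit residue bookkeeping is equivalent.
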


\begin{proof}
Firstly, we note that
\begin{gather} \label{eq:explicit-Gg1}
\frac{2y(z)}{\frac{dx}{dz}(z)} G_{g,1}(z) =
\frac{1}{4s^{2}}
\biggl(
\frac{\p^{2} F_{g-1,2}}{\p z_{1} \p z_{2}}(s,s) +
\sum_{\substack{g_{1}+g_{2} = g \\ g_{1}, g_{2} \ge 1}}
\frac{\p F_{g_{1}, 1}}{\p z_{1}}(s)
\frac{\p F_{g_{2}, 1}}{\p z_{2}}(s)
\biggr)
\end{gather}
holds. Using the dif\/fernetial recursion \eqref{eq:diff-rec}
for $n=1$, we have
\begin{gather*}
\frac{\p F_{g,1}}{\p z_{1}} (z) =
- \frac{1}{2y(z)\frac{dx}{dz}(z)}
\biggl(
\frac{\p^{2} F_{g-1,2}}{\p z_{1} \p z_{2}}(z,z) +
\sum_{\substack{g_{1}+g_{2} = g \\ g_{1}, g_{2} \ge 1}}
\frac{\p F_{g_{1}, 1}}{\p z_{1}}(z)
\frac{\p F_{g_{2}, 1}}{\p z_{1}}(z)
\biggr)
\\
\hphantom{\frac{\p F_{g,1}}{\p z_{1}} (z) =}{}
+ \frac{s}{\frac{dy}{dz}(s)\frac{dx}{dz}(s)(z^{2}-s^{2})}
\biggl(
\frac{\p^{2} F_{g-1,2}}{\p z_{1} \p z_{2}}(s,s) +
\sum_{\substack{g_{1}+g_{2} = g \\ g_{1}, g_{2} \ge 1}}
\frac{\p F_{g_{1}, 1}}{\p z_{1}}(s)
\frac{\p F_{g_{2}, 1}}{\p z_{1}}(s)
\biggr).
\end{gather*}
Then, Lemma~\ref{lem:decay-Wgn-Fgn}
implies that
\begin{gather*}
z\frac{\p F_{g,1}}{\p z_{1}} (z)dz
= z W_{g,1}(z)  = \frac{1}{8s^{2}} \biggl(
\frac{\p^{2} F_{g-1,2}}{\p z_{1} \p z_{2}}(s,s) +
\sum_{\substack{g_{1}+g_{2} = g \\ g_{1}, g_{2} \ge 1}}
\frac{\p F_{g_{1}, 1}}{\p z_{1}}(s)
\frac{\p F_{g_{2}, 1}}{\p z_{2}}(s)
\biggr) \frac{dz}{z} + O(1)
\end{gather*}
holds when $z \rightarrow \infty$.
Then the equality~\eqref{eq:n-is-1-part-in-Ggn}
follows from~\eqref{eq:dt-closed-Fg} and~\eqref{eq:explicit-Gg1}.
\end{proof}

\subsection[Proof of $(B)$]{Proof of $\boldsymbol{(B)}$}

One of the desired equality \eqref{eq:claim-A-3} is proved as follows.

\begin{lem} \label{prop:integral-for-Sm}
Under the assumption \eqref{eq:assume-A-1}, we have
\begin{gather}
\label{eq:claim-A-5}
\frac{\p S_{k}}{\p x}(x,t)  =   P_{k}(x,t), \\
\label{eq:claim-A-4}
S_{k}(x,t)   =   \int^{x}_{\infty} P_{k}(x',t)dx', \\
\label{eq:claim-A-2}
\frac{\p S_{k}}{\p t}(x,t)   =   \left[\frac{1}{2(x-q)}
\left(\hbar \frac{\p S}{\p x} - p \right) \right]_{\hbar^{k}}.
\end{gather}
\end{lem}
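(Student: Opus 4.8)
The plan is to establish the three identities \eqref{eq:claim-A-5}, \eqref{eq:claim-A-4}, \eqref{eq:claim-A-2} in that order, bootstrapping from the Riccati-type identity \eqref{eq:claim-A-1} proved in part $(A)$ together with the induction hypothesis \eqref{eq:assume-A-1}. The starting observation is that the formal solution $P^{(+)}$ of the Riccati equation \eqref{eq:Riccati} itself satisfies \eqref{eq:Riccati-2} with $\p S/\p x$ replaced by $P^{(+)}$; this is just \eqref{eq:Riccati} rewritten using the explicit forms of $f$ and $g$. Thus at order $\hbar^{k}$ I have two equations: the identity $(A)$ for $\p S/\p x$, whose right-hand side carries the term $2\hbar^{2}\p S/\p t$, and the genuine Riccati equation for $P^{(+)}$, whose right-hand side carries $\frac{\hbar}{x-q}(\hbar P^{(+)}-p)$.

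To prove \eqref{eq:claim-A-5} I would subtract these two equations at order $\hbar^{k}$. Writing $Q:=\p S/\p x$ and using $Q_{m}=P_{m}$ for $m<k$ (the first line of \eqref{eq:assume-A-1}), the left-hand side collapses: the term $\hbar^{2}\p_{x}(Q-P^{(+)})$ contributes nothing at order $\hbar^{k}$ because $Q-P^{(+)}$ begins at order $\hbar^{k-1}$, while $\hbar^{2}(Q^{2}-(P^{(+)})^{2})$ contributes exactly $2P_{0}(Q_{k}-P_{k})$. For the right-hand side I would convert the $t$-derivative term using the second line of \eqref{eq:assume-A-1}: since $[2\hbar^{2}\p S/\p t]_{\hbar^{k}}=2\,\p S_{k-1}/\p t$ and this equals twice the order-$(k-1)$ coefficient of $\frac{1}{2(x-q)}(\hbar\,\p_{x}S-p)$, a short $\hbar$-degree count (again using $Q_{m}=P_{m}$ for $m<k$, and that $1/(x-q)$ and $p$ carry only even resp.\ odd powers of $\hbar$) shows $[2\hbar^{2}\p S/\p t]_{\hbar^{k}}=[\frac{\hbar}{x-q}(\hbar P^{(+)}-p)]_{\hbar^{k}}$. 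Hence the two right-hand sides agree, the difference vanishes, and $2P_{0}(Q_{k}-P_{k})=0$. Since $P_{0}=2(x-q_{0})\sqrt{x+2q_{0}}\not\equiv0$, this gives $\p S_{k}/\p x=P_{k}$.

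The integral formula \eqref{eq:claim-A-4} follows by integrating $\p S_{k}/\p x=P_{k}$ in $x$; the only issue is to pin down the constant of integration, which I would do via the asymptotics as $x\to\infty$. By \eqref{eq:decay-of-Fgn-prin} the principal specialization obeys $F_{g,n}(z,\dots,z)=O(z^{-n})$, so each $S_{m}$ with $m\ge2$ tends to $0$ as $x\to\infty$, while Remark~\ref{rem:asymptotics} gives $P_{k}=O(x^{-3/2})$, so $\int^{x}_{\infty}P_{k}\,dx'$ converges. Matching the vanishing limits forces the lower endpoint $\infty$ with no additive constant.

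Finally, for \eqref{eq:claim-A-2} I would differentiate \eqref{eq:claim-A-4} in $t$ under the integral sign, giving $\p S_{k}/\p t=\int^{x}_{\infty}\p_{t}P_{k}\,dx'$, and then invoke the $t$-derivative relation \eqref{eq:dt-P} of Theorem~\ref{thm:Podd-and-t-derivative}, whose $\hbar^{k}$-coefficient reads $\p_{t}P_{k}=\p_{x}[\frac{\hbar P^{(+)}-p}{2(x-q)}]_{\hbar^{k}}$. Integrating back recovers $[\frac{1}{2(x-q)}(\hbar\,\p_{x}S-p)]_{\hbar^{k}}$ (here I replace $P^{(+)}$ by $\p_{x}S$ in the $\hbar^{k}$-coefficient, legitimate since $P_{m}=\p S_{m}/\p x$ for all $m\le k$, the case $m=k$ being \eqref{eq:claim-A-5} just proved), up to the boundary value at $x=\infty$. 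I expect the delicate point, and the main obstacle, to be precisely this boundary term: one must check that $[\frac{\hbar P^{(+)}-p}{2(x-q)}]_{\hbar^{k}}\to0$ as $x\to\infty$. This needs the refined expansion \eqref{eq:asymp-P}, because the slowly decaying and divergent pieces of $\hbar P^{(+)}$ (the $x^{3/2}$ and $x^{-1/2}$ terms) sit at order $\hbar^{0}$, while the constant-in-$x$ contributions from $-p$ sit at odd orders of $\hbar$, where the even series $1/(x-q)$ cannot pair with them; a parity-and-degree bookkeeping then shows the $\hbar^{k}$-coefficient decays like $O(x^{-1/2})$, killing the boundary term and establishing \eqref{eq:claim-A-2}.
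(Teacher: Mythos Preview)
Your proposal is correct and follows essentially the same route as the paper: convert $[2\hbar^{2}\partial_{t}S]_{\hbar^{k}}$ to the $\frac{\hbar}{x-q}(\hbar P^{(+)}-p)$ form via the second line of the induction hypothesis, so that $\partial_{x}S_{k}$ and $P_{k}$ solve the same Riccati recursion (your subtraction yielding $2P_{0}(Q_{k}-P_{k})=0$ is exactly the paper's ``uniqueness of \eqref{eq:recursion-for-Pm}''), then integrate from $\infty$ using \eqref{eq:decay-of-Fgn-prin}, then differentiate in $t$ and apply \eqref{eq:dt-P} with the boundary term controlled by the decay of $P_{m}$ for $m\ge1$ together with $P_{0}/(x-q_{0})^{2}\to0$. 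The paper records the boundary check in the latter form rather than via your parity bookkeeping, but the content is identical.
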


\begin{proof}
The equality \eqref{eq:claim-A-1} for $m=k$ and
the second equality in the assumption~\eqref{eq:assume-A-1}
imply
\begin{gather*}
\left[
\hbar^{2}\left( \left(\frac{\p S}{\p x}\right)^{2} +
\frac{\p^{2} S}{\p x^{2}} \right)
\right]_{\hbar^{k}}
= \left[ \frac{\hbar}{x-q}\left( \hbar \frac{\p S}{\p x} - p \right)
+ \big(4x^{3}+2t x + p^{2} - 4q^{3}-2tq\big) \right]_{\hbar^{k}}.
\end{gather*}
Thus ${\p S_{k}}/{\p x}$ and $P_{k}$ satisfy the
same equation~\eqref{eq:recursion-for-Pm} under our induction hypothesis.
Then the uniqueness of the solution of~\eqref{eq:recursion-for-Pm}
implies~\eqref{eq:claim-A-5}.

Since $S_{m}(x)$ for $m \ge 2$ decay when
$x \rightarrow \infty$
(cf.~\eqref{eq:decay-of-Fgn-prin}),
the equality~\eqref{eq:claim-A-4}
immediately follows from~\eqref{eq:claim-A-5}.
Then, the equality~\eqref{eq:dt-P} shows
\begin{gather*}
\frac{\p}{\p t} S_{k}(x,t)   =
\int^{x}_{\infty} \left[ \hbar \frac{\p}{\p t} P
\right]_{\hbar^{k}}dx
  =
\int^{x}_{\infty} \frac{\p}{\p x} \left[
\frac{\hbar P - p}{2(x-q)} \right]_{\hbar^{k}} dx
  =
\left[
\frac{1}{2(x-q)} \left( \hbar \frac{\p S}{\p x} - p \right)
\right]_{\hbar^{k}}.
\end{gather*}
The last equality follows from the assumption~\eqref{eq:assume-A-1} and the fact that
$P_{m}(x,t)$'s decay when $x \rightarrow \infty$
for $m \ge 1$ (see Remark~\ref{rem:asymptotics}),
and
\begin{gather*}
\lim_{x\rightarrow\infty} \frac{P_{0}(x,t)}{(x-q_{0})^{2}} = 0.
\end{gather*}
Thus we have proved \eqref{eq:claim-A-2}.
\end{proof}

Since we have also already proved \eqref{eq:claim-A-1}
for $m = k+1$, we can prove \eqref{eq:claim-B-3}
by the same discussion as the proof of
Lemma \ref{prop:integral-for-Sm} above.
Then, f\/inally we obtain

\begin{lem} \label{prop:partial-taufunction-conjecture}
The equality \eqref{eq:claim-B-4} is true; namely, we have
\begin{gather}\label{eq:claim-B-5}
\frac{dF_{(k+2)/2}}{dt} = \sigma_{2k+2}.
\end{gather}
\end{lem}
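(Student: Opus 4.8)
The plan is to compare two expressions for the $\hbar^{k+2}$-coefficient of $\hbar^2\bigl((\p S/\p x)^2 + \p^2 S/\p x^2\bigr)$: one coming from the topological recursion, one from the exact Riccati equation solved by $P^{(+)}$. At the point where this lemma is invoked, \eqref{eq:claim-A-3} and \eqref{eq:claim-B-3} have already been established, so $\p S_m/\p x = P_m$ for every $m \le k+1$ and the $t$-derivative relation \eqref{eq:dt-Riccati-2} holds through order $\hbar^{k+1}$. First I would invoke the topological-recursion identity: specialising \eqref{eq:compare-coefficients} to the odd index $m = k+1$ (here $k+1 \ge 3$) and then applying Lemma~\ref{lem:Gg1} with $g = (k+2)/2 \ge 2$ yields
\[
\Bigl[\hbar^2\bigl((\p S/\p x)^2 + \p^2 S/\p x^2\bigr)\Bigr]_{\hbar^{k+2}} = \Bigl[2\hbar^2\,\p S/\p t\Bigr]_{\hbar^{k+2}} + 2\,\frac{dF_{(k+2)/2}}{dt}.
\]

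Next I would write the same quantity using the Riccati equation. Since $P^{(+)}$ solves \eqref{eq:Riccati}, the form \eqref{eq:Riccati-2} with $\p S/\p x$ replaced by $P^{(+)}$ is an identity in $\hbar$; extracting its $\hbar^{k+2}$-coefficient and using $p^2 - 4q^3 - 2tq = 2\sigma$ together with \eqref{eq:parity-sigma} gives
\[
\Bigl[\hbar^2\bigl((P^{(+)})^2 + \p P^{(+)}/\p x\bigr)\Bigr]_{\hbar^{k+2}} = \Bigl[\hbar(x-q)^{-1}\bigl(\hbar P^{(+)} - p\bigr)\Bigr]_{\hbar^{k+2}} + 2\sigma_{k+2}.
\]
Subtracting the two displays is where the induction hypotheses enter. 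Because $\p S_m/\p x = P_m$ for all $m \le k+1$, the two left-hand sides differ only in their top terms, and the difference collapses to $2\,y(z(x))\,\bigl(\p S_{k+2}/\p x - P_{k+2}\bigr)$, the second-derivative contributions cancelling since $\p S_{k+1}/\p x = P_{k+1}$. On the right, \eqref{eq:dt-Riccati-2} at order $\hbar^{k+1}$ identifies $\bigl[\hbar(x-q)^{-1}(\hbar P^{(+)} - p)\bigr]_{\hbar^{k+2}}$ with $\bigl[2\hbar^2\,\p S/\p t\bigr]_{\hbar^{k+2}}$, so the $t$-derivative terms cancel. What survives is
\[
y(z(x))\,\Bigl(\frac{\p S_{k+2}}{\p x} - P_{k+2}\Bigr) = \frac{dF_{(k+2)/2}}{dt} - \sigma_{k+2},
\]
whose right-hand side is a constant in $x$ (it is $\sigma_{2g}$ for $g=(k+2)/2$, the Hamiltonian coefficient required in \eqref{eq:claim-B-4}).

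The decisive step is to evaluate this $x$-independent constant at the double turning point $x = q_0$. By Theorem~\ref{thm:Podd-and-t-derivative}(ii) every coefficient $P_m$, in particular $P_{k+2}$, is holomorphic at $x = q_0$; and $\p S_{k+2}/\p x$ is holomorphic there as well, since $S_{k+2}$ is assembled from the $F_{g,n}$, whose generating differentials $W_{g,n}$ are holomorphic away from the ramification point $z = 0$, while $x = q_0$ corresponds to $z = s = (3q_0)^{1/2} \neq 0$. Thus the left-hand side of the surviving identity is holomorphic at $x = q_0$, whereas $y(z(x)) = 2(x-q_0)(x+2q_0)^{1/2}$ vanishes simply there; a nonzero constant divided by $y(z(x))$ would produce a pole, so the constant must be $0$. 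This gives $dF_{(k+2)/2}/dt = \sigma_{2g}$ with $g=(k+2)/2$, which is exactly \eqref{eq:claim-B-4}, i.e.\ \eqref{eq:claim-B-5}. I expect this holomorphicity at the double turning point to be the crux of the argument: it is precisely the feature absent from the single-ramification-point setting of \cite{DM14, DM14-2}, and it is what forces the constant to vanish and thereby singles out the correct $\tau$-function.
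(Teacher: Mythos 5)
Your proposal is correct and takes essentially the same route as the paper: you reproduce its key identity \eqref{eq:final-remark}, namely $\frac{\p S_{0}}{\p x}\bigl(\frac{\p S_{k+2}}{\p x}-P_{k+2}\bigr)=\frac{dF_{(k+2)/2}}{dt}-\sigma_{k+2}$, by comparing \eqref{eq:compare-coefficients} at the odd index $m=k+1$ together with Lemma~\ref{lem:Gg1} against the order-$\hbar^{k+2}$ coefficient of the Riccati equation, with the induction hypotheses cancelling everything below the top order exactly as in the paper's comparison of \eqref{eq:S-rec-section3} and \eqref{eq:P-rec-section3}. Your concluding no-pole argument at the double turning point is just a rephrasing of the paper's substitution of $x=q_{0}$ into \eqref{eq:final-remark}, resting on the same holomorphicity input (Theorem~\ref{thm:Podd-and-t-derivative}(ii) for $P_{k+2}$, and the location of the poles of the $W_{g,n}$ at $z=0$ for $S_{k+2}$) and the simple vanishing of $y(z(x))=2(x-q_{0})\sqrt{x+2q_{0}}$ at $x=q_{0}$.
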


\begin{proof}
It follows from
the equality \eqref{eq:compare-coefficients}
(for the odd number $m = k+1$) and Lemma~\ref{lem:Gg1} that
\begin{gather} \label{eq:S-rec-section3}
2 \frac{\p S_{0}}{\p x}\frac{\p S_{k+2}}{\p x} +
\sum_{\substack{a+b=k+2 \\ a, b \ge 1}}
\frac{\p S_{a}}{\p x}\frac{\p S_{b}}{\p x} +
\frac{\p^{2} S_{k+1}}{\p x^{2}}
- 2\frac{\p S_{k+1}}{\p t}
 = 2 \frac{d F_{(k+2)/2}}{dt}
\end{gather}
holds.
On the other hand, we know that $P_{k+2}$ satisf\/ies
\begin{gather} \label{eq:P-rec-section3}
2 P_{0}P_{k+2} +
\sum_{\substack{a+b=k+2 \\ a, b \ge 1}}
P_{a}P_{b} +
\frac{\p P_{k+1}}{\p x} - \left[ \frac{\hbar}{x-q}
\left( \hbar P-p \right) \right]_{\hbar^{k+2}}
 = 2 \sigma_{k+2}
\end{gather}
(cf.~\eqref{eq:recursion-for-Pm}).
Under our assumption, comparing \eqref{eq:S-rec-section3}
and \eqref{eq:P-rec-section3}, we have
\begin{gather} \label{eq:final-remark}
\frac{\p S_{0}}{\p x} \left( \frac{\p S_{k+2}}{\p x} - P_{k+2} \right)
 = \frac{d F_{(k+2)/2}}{dt} - \sigma_{k+2}.
\end{gather}
Note that the right hand-side doesn't depend on~$x$.
Then, thanks to the fact
\begin{gather*}
\frac{\p S_{0}}{\p x} \biggr|_{x=q_{0}}
= 0
\end{gather*}
and the holomorphicity of $S_{m}(x)$ and $P_{m}(x)$
at the double turning point $x = q_0$
(see Theorem~\ref{thm:Podd-and-t-derivative}),
we have the desired equality~\eqref{eq:claim-B-5}
by substituting $x = q_{0}$ into~\eqref{eq:final-remark}.
\end{proof}

This completes the proof of $(B)$ and Theorem~\ref{thm:induction-scheme}.
Thus we have proved Theorems~\ref{conj:2} and~\ref{conj:1}.

\begin{rem}
Since the spectral curve~\eqref{eq:sp-curve} has only
one branch point, we have
\begin{gather} \label{eq:v-and-infty}
\int^{x}_{v}P_{m}(x',t)dx' = \int^{x}_{\infty}P_{m}(x',t)dx'
\end{gather}
for all even $m \ge 2$.
This implies that the WKB solution~\eqref{eq:wave-function}
def\/ined by the topological recursion coincides with
the WKB solution~\eqref{eq:WKB-IM} constructed
in Section~\ref{subsection:WKB-for-scalar-Lax}.
However, the above equality~\eqref{eq:v-and-infty}
may not hold for
other Painlev\'e equations since their spectral curves have
more branch points in general.
\end{rem}

\appendix

\section[Alternative def\/inition of the $\tau$-function by Jimbo--Miwa--Ueno]{Alternative def\/inition of the $\boldsymbol{\tau}$-function\\ by Jimbo--Miwa--Ueno}
\label{section:relation-of-two-conjectures}

There is another def\/inition of $\tau$-function
\eqref{eq:tau} in terms of the formal
solution \eqref{eq:WKB-IM-mat} of the isomonodromy system.
\begin{prop}[\protect{\cite[Section~5]{JMU}; see also \cite[Section~4.2]{BBE12}
and \cite[Section~1.5]{Borot-Eynard}}]
The $\tau$-function satisfies
\begin{gather} \label{eq:alt-def-tau}
\frac{d}{dt} \log \tau(t,\hbar) = -2 \Res_{x=\infty} \left(
\frac{1}{\hbar} \frac{\partial T_{\infty}}{\partial t}(x,t)
{\mathcal W}_{1}(x,t,\hbar) dx \right),
\end{gather}
where
\begin{gather*}
T_{\infty}(x,t) := \frac{4x^{5/2}}{5} + t x^{1/2}
\end{gather*}
$($which is the divergent part of $\int^{x}P_{0}^{(+)}(x',t)dx'$
as $x\rightarrow \infty)$, and
\begin{gather*}
{\mathcal W}_{1}(x,t,\hbar) =
\frac{\p \psi_{+}}{\partial x}(x,t,\hbar) \tilde{\psi}_{-}(x,t,\hbar)
- \frac{\p \tilde{\psi}_{+} }{\p x}(x,t,\hbar){\psi}_{-}(x,t,\hbar).
\end{gather*}
\end{prop}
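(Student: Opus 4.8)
The plan is to verify that the right-hand side of \eqref{eq:alt-def-tau} equals $\hbar^{-2}\sigma(t,\hbar)$, which by Definition~\ref{def:tau-function} (equation \eqref{eq:tau}) is precisely $\frac{d}{dt}\log\tau$. Since $\frac{\p T_{\infty}}{\p t}(x,t)=x^{1/2}$, the right-hand side of \eqref{eq:alt-def-tau} is $-\frac{2}{\hbar}\Res_{x=\infty}\bigl(x^{1/2}\,\mathcal{W}_1(x,t,\hbar)\,dx\bigr)$, so the whole statement reduces to a single residue computation at $x=\infty$: I need the coefficient of $x^{-3/2}$ in the formal series $\mathcal{W}_1$ as $x\to\infty$.

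First I would put $\mathcal{W}_1$ in closed form. From the proof of Theorem~\ref{thm:Podd-and-t-derivative}(iii) we have $\frac{\p\psi_{\pm}}{\p x}=P^{(\pm)}\psi_{\pm}$, and by definition $\tilde{\psi}_{\pm}=\rho_{\pm}\psi_{\pm}$ with $\rho_{\pm}=\frac{\hbar P^{(\pm)}-A_{11}}{A_{12}}=\frac{\hbar P^{(\pm)}-p}{4(x-q)}$. Using in addition that $\psi_{+}\psi_{-}=\frac{2(x-q)}{\hbar P_{\rm odd}}$ (the exponential factors in \eqref{eq:WKB-IM} cancel), a short computation gives
\begin{gather*}
\mathcal{W}_1 = -P^{(+)} - \frac{1}{2P_{\rm odd}}\frac{\p P^{(+)}}{\p x}
+ \frac{\hbar P^{(+)}-p}{2\hbar(x-q)P_{\rm odd}}.
\end{gather*}
Each coefficient in the $\hbar$-expansion is rational in $x$ and $\sqrt{x+2q_{0}}$, so the expansion at $x=\infty$ is well defined order by order in $\hbar$.

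The key step is to show that only the leading term $-P^{(+)}$ contributes to the $x^{-3/2}$ coefficient. Here I would exploit the power structure coming from \eqref{eq:asymp-P}: $P_{\rm odd}$ carries only half-integer powers of $x$ (its $x^{-1}$ term cancels), whereas $\frac{\p P^{(+)}}{\p x}$ and $\hbar P^{(+)}-p$ mix integer and half-integer powers. Dividing by $P_{\rm odd}\sim x^{3/2}$ (resp.\ by $(x-q)P_{\rm odd}\sim x^{5/2}$) and expanding, one checks that the last two terms produce only integer powers together with half-integer powers starting at $x^{-5/2}$; in particular they contribute nothing to $x^{-3/2}$ (and their $x^{-1}$ contributions cancel, a useful consistency check). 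Equivalently, $\mathcal{W}_1+P_{\rm odd}$ has no $x^{-3/2}$ term, so $x^{1/2}\mathcal{W}_1$ and $-x^{1/2}P_{\rm odd}$ have the same residue at infinity. Reading the $x^{-3/2}$ coefficient of $P_{\rm odd}$ off \eqref{eq:asymp-P} gives $\frac{\sigma}{2\hbar}$, and substituting into \eqref{eq:alt-def-tau} reproduces $\frac{d}{dt}\log\tau=\hbar^{-2}\sigma$, as required.

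The main obstacle is exactly this residue bookkeeping: one must track carefully which pieces of the two correction terms can land on $x^{-3/2}$, and then fix the overall sign and normalization against the convention adopted for $\Res_{x=\infty}$ so that the result agrees with \eqref{eq:tau} on the nose. Alternatively, the whole proposition can be read as the specialization to $\PI$ of the Bertola--Eynard/Borot--Eynard residue formula for $\frac{d}{dt}\log\tau$ \cite{BBE12, Borot-Eynard}, in which $\Theta=\operatorname{diag}(T_{\infty},-T_{\infty})/\hbar$ is the exponent of the formal solution at the unique irregular singular point $x=\infty$, so that $\frac{\p\Theta}{\p t}=x^{1/2}\operatorname{diag}(1,-1)/\hbar$ and $\mathcal{W}_1$ is the associated diagonal kernel; from that viewpoint the computation above is precisely the check that the general formula is consistent with the definition \eqref{eq:tau} of the $\tau$-function.
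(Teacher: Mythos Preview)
Your approach is essentially the paper's: write $\mathcal{W}_1$ in closed form using $\psi_+\psi_-=2(x-q)/(\hbar P_{\rm odd})$ and $\tilde\psi_\pm=\rho_\pm\psi_\pm$, then read off the $x^{-3/2}$ coefficient from the asymptotics~\eqref{eq:asymp-P} and compute the residue. The paper states $\mathcal{W}_1 = P^{(+)} + \tfrac{A_{12}}{2\hbar P_{\rm odd}}\,\partial_x\!\bigl(\tfrac{\hbar P^{(+)}-A_{11}}{A_{12}}\bigr)$ and simply invokes~\eqref{eq:asymp-P} without your half-integer/integer bookkeeping; note that your formula is the global negative of theirs, which is precisely the sign/normalization issue you flagged as the remaining obstacle.
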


\begin{proof}
It follows from the def\/inition~\eqref{eq:WKB-IM-mat} of $\Psi$ that
\begin{gather*} 
{\mathcal W}_{1}(x,t,\hbar) =
P^{(+)}(x,t,\hbar) +
\frac{A_{12}(x,t,\hbar)}{2\hbar P_{\rm odd}(x,t,\hbar)}
\frac{\p}{\p x}\left( \frac{\hbar P^{(+)} (x,t,\hbar)
- A_{11}(x,t,\hbar)}
{A_{12}(x,t,\hbar)} \right).
\end{gather*}
Then, the asymptotics \eqref{eq:asymp-P}
of $P^{(\pm)}(x,t,\hbar)$ implies that
\begin{gather*}
{\mathcal W}_{1}(x,t,\hbar)
= \frac{2}{\hbar}x^{3/2} + \frac{t}{2\hbar}x^{-1/2} +
\frac{\sigma(t,\hbar)}{2\hbar} x^{-3/2}
+ O\big(x^{-2}\big)
\end{gather*}
holds when $x \rightarrow \infty$, and
thus we have \eqref{eq:alt-def-tau}.
\end{proof}

\subsection*{Acknowledgements}
The authors are grateful to
Motohico Mulase for many valuable comments,
discussion and continuous encouragements.
They also thank
Olivia Dumitrescu and Bertrand Eynard
for helpful comments.
K.I. work is supported by the JSPS for
Advancing Strategic International Networks to Accelerate
the Circulation of Talented Researchers
``Mathematical Science of Symmetry, Topology and Moduli,
Evolution of International Research Network
based on Osaka City University Advanced Mathematical Institute
(OCAMI)''.
A.S.~work is supported by UC Davis under the Graduate
Research Mentorship fellowship.
This article is written during the
K.I.~stay at The University of California, Davis.
K.I.~would also like to thank the
institute for its support and hospitality.

\pdfbookmark[1]{References}{ref}
\LastPageEnding

\end{document}